\newcommand{\ShortVersionOnly}[1]{}
\newcommand{\ProofIsInAppendix}[1]{}
\newcommand{\ProofIsInFullVersion}[1]{}
\title{%
Constant delay enumeration with FPT-preprocessing for conjunctive
queries of bounded submodular width\footnote{This is the full version
  of the conference contribution \cite{DBLP:conf/mfcs/BerkholzS19}.}
}%
  \author{Christoph Berkholz, Nicole Schweikardt \\
    Humboldt-Universität zu Berlin \\
    \texttt{\{berkholz,schweikn\}@informatik.hu-berlin.de}
  }
       \newtheorem{theorem}{Theorem}[section] 
        \newtheorem{lemma}[theorem]{Lemma}
        \newtheorem{corollary}[theorem]{Corollary}
        \theoremstyle{definition}
        \newtheorem{definition}[theorem]{Definition}
\newcommand{\nc}[1]{\newcommand{#1}}
\newcommand{\rnc}[1]{\renewcommand{#1}}
\nc{\myparagraph}[1]{\textbf{#1.}}
\rnc{\leq}{\ensuremath{\leqslant}}
\rnc{\geq}{\ensuremath{\geqslant}}
\rnc{\le}{\leq}
\rnc{\ge}{\geq}
\nc{\isdef}{\ensuremath{:=}}
\nc{\deff}{\isdef}
\nc{\defi}{\isdef}
\nc{\set}[1]{\ensuremath{\{#1\}}}
\nc{\setsize}[1]{\ensuremath{|#1|}}
\nc{\Setsize}[1]{\ensuremath{\big|#1\big|}}
\nc{\Set}[1]{\ensuremath{\big\{#1\big\}}}
\nc{\setc}[2]{\set{#1 \ : \ #2}}
\nc{\Setc}[2]{\Set{#1 \ : \ #2}}
\nc{\aufgerundet}[1]{\ensuremath{\lceil #1 \rceil}}
\nc{\abgerundet}[1]{\ensuremath{\lfloor #1 \rfloor}}
\nc{\dcup}{\ensuremath{\dot\cup}}
\nc{\ov}[1]{\ensuremath{\overline{#1}}}
\nc{\NN}{\ensuremath{\mathbb{N}}}
\nc{\NNpos}{\ensuremath{\NN_{\scriptscriptstyle\geq 1}}}
\nc{\RR}{\ensuremath{\mathbb{R}}}
\nc{\RRpos}{\ensuremath{\RR_{\scriptscriptstyle\geq 0}}}
\nc{\QQ}{\ensuremath{\mathbb{Q}}}
\nc{\QQpos}{\ensuremath{\QQ_{\scriptscriptstyle\geq 0}}}
\nc{\und}{\ensuremath{\wedge}}
\nc{\Und}{\ensuremath{\bigwedge}}
\nc{\oder}{\ensuremath{\vee}}
\nc{\Oder}{\ensuremath{\bigvee}}
\nc{\nicht}{\ensuremath{\neg}}
\nc{\impl}{\ensuremath{\to}}
\nc{\gdw}{\ensuremath{\leftrightarrow}}
\newcommand{\uund}{\,\und\,}
\nc{\free}{\ensuremath{\textrm{\upshape free}}}
\nc{\quant}{\ensuremath{\textrm{\upshape quant}}}
\nc{\ar}{\ensuremath{\operatorname{ar}}}
\nc{\Structure}[1]{\ensuremath{\mathcal{#1}}}
\nc{\A}{\Structure{A}}
\nc{\B}{\Structure{B}}
\nc{\C}{\Structure{C}}
\nc{\isom}{\ensuremath{\cong}}
\nc{\querycont}{\ensuremath{\sqsubseteq}}
\nc{\eval}[2]{\ensuremath{\llbracket#1\rrbracket^{#2}}}
\nc{\semantik}[1]{\ensuremath{\left\llbracket#1\right\rrbracket}}
\nc{\CanDB}[1]{\ensuremath{\A_{#1}}} %
\nc{\CanTup}[1]{\ensuremath{t_{#1}}} %
\newcommand{\queryphi}{\varphi}
\newcommand{\varv}{v}
\newcommand{\relS}{S} %
\newcommand{\relT}{T} %
\newcommand{\relE}{E} %
\renewcommand{\epsilon}{\varepsilon}
\newcommand{\strucA}{\mathcal A}
\newcommand{\strucB}{\mathcal B}
\newcommand{\strucC}{\mathcal C}
\nc{\Vars}{\ensuremath{\textrm{\upshape vars}}}
\nc{\vars}{\Vars}
\nc{\Cons}{\ensuremath{\textrm{\upshape cons}}}
\nc{\cons}{\Cons}
\nc{\atoms}{\ensuremath{\textrm{\upshape atoms}}}
\nc{\Atoms}{\atoms}
\nc{\Adom}{\ensuremath{\textrm{\upshape adom}}}
\nc{\adom}[1]{\ensuremath{\Adom(#1)}} %
\nc{\dom}[1]{\ensuremath{\textrm{\upshape dom}(#1)}} %
\newcommand{\DBone}[1]{}
\newcommand{\bigoh}{O}
\newcommand{\bigOh}{\bigoh}
\newcommand{\parent}{\pointerfont{parent}}
\nc{\arrayfont}[1]{\ensuremath{\texttt{#1}}}
\newcommand{\size}[1]{\ensuremath{|\!|#1|\!|}}
\nc{\card}[1]{\ensuremath{|#1|}}
\newcommand{\assign}{\ensuremath{\alpha}}
\newcommand{\potenzmengeof}[1]{2^{#1}}
\nc{\insertp}{\textsc{Insert}}
\nc{\cleanup}{\textsc{cleanUp}}
\nc{\cleanups}{\textsc{cleanUp'}}
\nc{\Yes}{\texttt{yes}}
\nc{\No}{\texttt{no}}
\nc{\Dom}{\ensuremath{\textbf{dom}}}
\nc{\Var}{\ensuremath{\textbf{var}}}
\nc{\schema}{\ensuremath{\sigma}}
\nc{\DB}{\ensuremath{D}} %
\nc{\DBstrich}{\ensuremath{D'}} %
\nc{\DBstart}{\ensuremath{{\DB_0}}} %
\nc{\DBempty}{\ensuremath{{\DB_{\emptyset}}}} %
\nc{\DS}{\ensuremath{\mathtt{D}}} %
\rnc{\phi}{\queryphi}
\nc{\UpdateFont}[1]{\ensuremath{\textsf{#1}}}
\nc{\Delete}{\UpdateFont{delete}}
\nc{\Insert}{\UpdateFont{insert}}
\nc{\Update}{\UpdateFont{update}}
\nc{\AlgoFont}[1]{\ensuremath{\textbf{#1}}}
\nc{\PREPROCESS}{\AlgoFont{preprocess}}
\nc{\INIT}{\AlgoFont{init}}
\nc{\UPDATE}{\AlgoFont{update}}
\nc{\ENUMERATE}{\AlgoFont{enumerate}}
\nc{\COUNT}{\AlgoFont{count}}
\nc{\ANSWER}{\AlgoFont{answer}}
\nc{\TEST}{\AlgoFont{test}}
\nc{\EOE}{\texttt{\upshape EOE}\xspace} %
\nc{\preprocessingtime}{\ensuremath{t_p}}
\nc{\inittime}{\ensuremath{t_i}}
\nc{\delaytime}{\ensuremath{t_d}}
\nc{\updatetime}{\ensuremath{t_u}}
\nc{\answertime}{\ensuremath{t_a}}
\nc{\countingtime}{\ensuremath{t_c}}
\nc{\testingtime}{\ensuremath{t_t}}
\nc{\preprocessingtimehat}{\ensuremath{\hat{t}_p}}
\nc{\inittimehat}{\ensuremath{\hat{t}_i}}
\nc{\delaytimehat}{\ensuremath{\hat{t}_d}}
\nc{\updatetimehat}{\ensuremath{\hat{t}_u}}
\nc{\answertimehat}{\ensuremath{\hat{t}_a}}
\nc{\countingtimehat}{\ensuremath{\hat{t}_c}}
\nc{\testingtimehat}{\ensuremath{\hat{t}_t}}
\nc{\phiBTypical}{\ensuremath{\phi'_{\relS\text{-}\relE\text{-}\relT}}}
\nc{\phiJTypical}{\ensuremath{\phi_{\relS\text{-}\relE\text{-}\relT}}}
\nc{\phiET}{\ensuremath{\phi_{\relE\text{-}\relT}}}
\nc{\restrict}[2]{\ensuremath{{#1}_{|#2}}}
\nc{\extend}[3]{\ensuremath{{#1}\tfrac{#3}{#2}}}
\nc{\valuation}{\ensuremath{\beta}}
\nc{\emptyassign}{\ensuremath{\emptyset}}
\nc{\Assign}[2]{\ensuremath{\frac{#2}{#1}}}
\nc{\vroot}{\ensuremath{\varv_{\textsl{root}}}}
\nc{\pointerfont}[1]{\textit{#1}}
\nc{\varitem}[1]{\ensuremath{v^{#1}}}
\nc{\assitem}[1]{\ensuremath{\assign^{#1}}}
\nc{\constitem}[1]{\ensuremath{a^{#1}}}
\nc{\parentitem}[1]{\ensuremath{\parent^{#1}}}
\nc{\childitem}[2]{\ensuremath{\pointerfont{child}^{#1}_{#2}}}
\nc{\llist}[2]{\ensuremath{\mathcal{L}_{#2}^{#1}}}
\nc{\startlist}{\ensuremath{\mathcal{L}_{\text{\upshape start}}}\xspace}
\nc{\nextlistitem}[1]{\ensuremath{\pointerfont{next-listitem}^{#1}}}
\nc{\prevlistitem}[1]{\ensuremath{\pointerfont{prev-listitem}^{#1}}}
\nc{\countitem}[1]{\ensuremath{C_{\textit{below}}^{#1}}}
\nc{\desc}[1]{\ensuremath{\text{desc}}}
\nc{\Null}{\ensuremath{0}}
\nc{\arrayA}{\arrayfont{A}}
\nc{\arrayB}{\arrayfont{B}}
\nc{\arrayC}{\arrayfont{C}}
\nc{\arrayE}{\arrayfont{E}}
\nc{\ITEMS}{\mathcal{I}}
\nc{\NIL}{\textsc{nil}}
\nc{\TupleSet}{\ensuremath{\mathcal{T}}}
\nc{\ResultSet}{\ensuremath{\mathcal{R}}}
\nc{\SkipArrayNext}[1]{\ensuremath{\mathsf{skip}[#1].\mathsf{next}}}
\nc{\SkipArrayPrev}[1]{\ensuremath{\mathsf{skip}[#1].\mathsf{prev}}}
\nc{\AlgoA}{\ensuremath{\mathbb{A}}}
\nc{\nil}{\texttt{nil}\xspace}
\nc{\SkipStart}{\ensuremath{\mathsf{sk{-}start}}}
\nc{\tup}{\ensuremath{\ov{t}}}
\nc{\tups}{\ensuremath{\ov{s}}}
\nc{\prozvisit}{\ensuremath{\textsc{Visit}}}
\nc{\prozvisitrev}{\ensuremath{\textsc{Visit}^{-1}}}
\nc{\tut}{\ensuremath{t}}
\nc{\enumprev}{\ensuremath{\vartriangleleft}}
\nc{\SkipLast}{\ensuremath{\mathsf{sk{-}last}}}
\nc{\lllist}{\ensuremath{\mathcal{L}}}
\nc{\pllist}{\ensuremath{\mathcal{L}^+}}
\nc{\milist}{\ensuremath{\mathcal{L}^-}}
\nc{\cilist}{\ensuremath{\mathcal{L}^\circ}}
\nc{\numitmpl}{\ensuremath{+{-}\text{on}{-}\text{path}}}
\nc{\numitmmi}{\ensuremath{-{-}\text{on}{-}\text{path}}}
\nc{\numitmci}{\ensuremath{\circ{-}\text{on}{-}\text{path}}}
\nc{\DBnew}{\ensuremath{\DB_{\text{new}}}}
\nc{\DBold}{\ensuremath{\DB_{\text{old}}}}
\nc{\liitmpl}{\ensuremath{\mathcal{L}^{+{-}\text{on}{-}\text{path}}}}
\nc{\liitmmi}{\ensuremath{\mathcal{L}^{-{-}\text{on}{-}\text{path}}}}
\nc{\liitmci}{\ensuremath{\mathcal{L}^{\circ{-}\text{on}{-}\text{path}}}}
\nc{\ITEMSres}[1]{\ensuremath{\ITEMS|_{#1}}}
\nc{\prVisit}{\textsc{Visit}}
\nc{\prVisitRes}{\textsc{VisitRes}}
\nc{\prEnumWithItem}{\textsc{EnumWithItem}}
\nc{\prFindItems}{\textsc{FindItems}}
\nc{\emptytuple}{\ensuremath{()}}
\nc{\emptyword}{\ensuremath{\varepsilon}}
\nc{\proj}{\ensuremath{\pi}}
\nc{\select}{\ensuremath{\sigma}}
\nc{\FD}{\ensuremath{\delta_{\textit{fd}}}} %
\nc{\IND}{\ensuremath{\delta_{\textit{ind}}}} %
\nc{\INDtilde}{\ensuremath{\tilde{\delta}_{\textit{ind}}}}%
\nc{\SD}{\ensuremath{\delta_{\textit{sd}}}} %
\nc{\CC}{\ensuremath{\delta_{\textit{cc}}}}%
\nc{\DEP}{\ensuremath{\delta}} %
\nc{\CONSTR}{\ensuremath{\Sigma}} %
\nc{\qSET}{\ensuremath{q_{\textit{S-E-T}}}}
\nc{\pSET}{\ensuremath{p_{\textit{S-E-T}}}}
\nc{\qET}{\ensuremath{q_{\textit{E-T}}}}
\nc{\Ans}{\ensuremath{\textit{Ans}}}
\nc{\query}{\ensuremath{\phi}}
\nc{\qatom}{\ensuremath{\alpha}}
\nc{\HG}{\ensuremath{\mathcal{H}}} %
\nc{\Nodes}{\ensuremath{V}} %
\nc{\Edges}{\ensuremath{E}} %
\nc{\HD}{\ensuremath{\textit{H}}} %
\nc{\Tree}{\ensuremath{T}}
\nc{\rootedTree}{\ensuremath{\hat{\Tree}}}
\nc{\treenode}{\ensuremath{t}} %
\nc{\treenodeparent}{\ensuremath{p}} %
\nc{\parentnode}{\treenodeparent} 
\nc{\treeroot}{\ensuremath{r}} %
\nc{\Bag}{\ensuremath{\chi}}
\nc{\Cover}{\ensuremath{\lambda}}
\nc{\FHD}{\ensuremath{\textit{F}}} %
\nc{\Weight}{\ensuremath{\gamma}} %
\nc{\Width}{\ensuremath{\textit{width}}}
\nc{\GHW}{\ensuremath{\textit{ghw}}} %
\nc{\FHW}{\ensuremath{\textit{fhw}}} %
\nc{\freetreenodes}{\ensuremath{U}} %
\nc{\prunedTree}{\ensuremath{\tilde{\Tree}}}
\nc{\prunedSchema}{\ensuremath{\tilde{\schema}}}
\nc{\prunedDB}{\ensuremath{\tilde{\DB}}}
\nc{\prunedDBold}{\ensuremath{\prunedDB_{\textit{old}}}}
\nc{\prunedDBnew}{\ensuremath{\prunedDB_{\textit{new}}}}
\nc{\prunedQuery}{\ensuremath{\tilde{\query}}}
\nc{\Start}{\ensuremath{\texttt{fetch-first}}}
\nc{\Next}{\ensuremath{\texttt{fetch-next}}}
\nc{\TestTuple}{\ensuremath{\texttt{test}}}
\nc{\PositionCursor}{\ensuremath{\texttt{position-cursor}}}
\nc{\Mapping}[1]{\ensuremath{\tilde{#1}}} %
\nc{\MappingR}{\ensuremath{\tilde{R}}}
\newcommand{\mapR}{\ensuremath{\mathbf{R}}}
\nc{\Algo}[1]{\ensuremath{\textsc{#1}}}
\nc{\True}{\ensuremath{\texttt{true}}}
\nc{\False}{\ensuremath{\texttt{false}}}
\newcommand{\widthw}{w}
\newcommand{\SubWidthSet}{\ensuremath{\mathsf S}} %
\newcommand{\TreeDecompSet}{\ensuremath{\mathsf T}} %
\newcommand{\fcTreeDecompSet}{\ensuremath{\mathsf f\mathsf c\mathsf T}} %
\nc{\SUBW}{\ensuremath{\textit{subw}}} %
\nc{\ADW}{\ensuremath{\textit{adw}}} %
\nc{\fcSUBW}{\ensuremath{\textit{fc-subw}}} %
\nc{\fcFHW}{\ensuremath{\textit{fc-fhw}}} %
\nc{\fcGHW}{\ensuremath{\textit{fc-ghw}}} %
\nc{\TD}{\ensuremath{\textit{TD}}} %
\nc{\FDecom}{\ensuremath{\textit{F}}} %
\newcommand{\projvar}[2]{#1\langle#2\rangle}
\newcommand{\paramell}{\ell}
\newcommand{\vsets}{\mathfrak{s}}
\newcounter{CommentCounterRed}
\newcounter{CommentCounterBlue}
\newcounter{CommentCounterGreen}
\newcounter{CommentCounterGray}
\newcounter{CommentCounterMagenta}
\begin{document}

\maketitle

\begin{abstract}
Marx (STOC~2010, J.~ACM 2013) introduced the notion of {submodular width} of a conjunctive query (CQ) and showed that for any class $\Phi$ of Boolean CQs of bounded submodular width, the model-checking problem for $\Phi$ on the class of all finite structures is fixed-parameter tractable (FPT). 
Note that for non-Boolean queries, the size of the query result may be far too large to be computed entirely within FPT time.
We investigate the free-connex variant of submodular width and
generalise Marx's result to non-Boolean queries as follows: For every class $\Phi$ of CQs of bounded free-connex submodular width, within FPT-preprocessing time we can build a data structure that allows to enumerate, without repetition and with constant delay, all tuples of the query result. 
Our proof builds upon Marx's splitting routine to decompose the query result into a union of results; but we have to tackle the additional technical difficulty to ensure that these can be enumerated efficiently.
\end{abstract}

\section{Introduction}\label{section:intro}

In the past decade, starting with Durand and
Grandjean~\cite{DurandGrandjean_BoundedDegree}, the fields of 
logic in computer science and database theory 
have seen a large number of
contributions that deal with the efficient enumeration of query
results. In this scenario, the objective is as follows: given a
finite relational structure (i.e., a database) and a logical formula (i.e., a
query), after a short preprocessing phase, the query results shall be
generated one by one, without repetition, with guarantees on the
maximum delay time between the output of two tuples.  In this vein,
the best that one can hope for is \emph{constant} delay (i.e., the
delay may depend on the size of the query but not on that of the
input structure) and \emph{linear} preprocessing time 
(i.e., time $f(\phi){\cdot} \bigOh(N)$ where $N$ is the size of a reasonable representation of the input structure, $\phi$ is the query, and $f(\phi)$ is a number only depending on the query but not on the input structure).  Constant delay
enumeration has also been adopted as a central concept in 
\emph{factorised databases} that gained recent
attention~\cite{DBLP:journals/tods/OlteanuZ15,DBLP:journals/sigmod/OlteanuS16}.

Quite a number of query evaluation problems are known to admit
constant delay algorithms preceded by linear or pseudo-linear time
preprocessing.  This is the case for all first-order queries, provided
that they are evaluated over
classes of structures of bounded
degree~\cite{DurandGrandjean_BoundedDegree,KazanaSegoufin_BoundedDegree,BKS-ICDT17,DBLP:conf/lics/KuskeS17},
low degree~\cite{DBLP:conf/pods/DurandSS14}, 
bounded expansion \cite{DBLP:conf/pods/KazanaS13},
locally bounded expansion~\cite{DBLP:conf/icdt/SegoufinV17},
and on
classes that are nowhere dense \cite{DBLP:conf/pods/SchweikardtSV18}.  
Also different data models have been investigated, including tree-like
data and document spanners 
\cite{%
DBLP:conf/csl/Bagan06,
DBLP:journals/tocl/KazanaS13,
DBLP:conf/icdt/AmarilliBMN19}.
Recently, also the \emph{dynamic} setting, where a fixed query has to be
evaluated repeatedly against a database that is constantly updated,
has received quite some attention
\cite{%
DBLP:conf/csl/LosemannM14,
BKS-ICDT17,
BKS_enumeration_PODS17,
DynamicYannakakis2017,
DBLP:conf/icdt/BerkholzKS18,
DBLP:conf/icdt/AmarilliBM18,
DBLP:conf/pods/NiewerthS18,
DBLP:conf/lics/Niewerth18,
DBLP:conf/pods/AmarilliBMN19}.

This paper deals with the classical, \emph{static} setting without
database updates.
We focus on evaluating conjunctive queries (CQs, i.e.,
primitive-positive formulas) on
arbitrary relational structures.\footnote{In this paper, structures
will always be finite and relational. }
In the following, \emph{FPT-preprocessing} (resp., \emph{FPL-preprocessing}) means preprocessing that takes time $f(\phi){\cdot}N^{\bigOh(1)}$ (resp.,
$f(\phi){\cdot}\bigOh(N)$), and \emph{constant delay} means delay $f(\phi)$, where $f$ is a computable function, $\phi$ is the query, and $N$ is the size of the input structure.

Bagan et al.~\cite{Bagan.2007} showed that
every free-connex acyclic CQ allows
constant delay enumeration after FPL-preprocessing. 
More refined results in this vein are due to
Bagan~\cite{Bagan_PhD} and Brault-Baron~\cite{BraultBaron_PhD}; see
\cite{DBLP:journals/sigmod/Segoufin15} for a
survey and \cite{SiglogNewsTutorial_BGS2020} for a tutorial.  
Bagan et al.\ \cite{Bagan.2007} complemented their result by a
conditional lower bound: 
assuming that Boolean matrix multiplication cannot be accomplished in
time $O(n^2)$, 
self-join-free acyclic CQs that are \emph{not} free-connex
\emph{cannot} be enumerated with constant
delay and FPL-preprocessing. 
This demonstrates that even if the evaluation of \emph{Boolean}
queries is easy (as known for all acyclic CQs
\cite{Yannakakis1981}), the enumeration of the results of
non-Boolean queries might be hard (here, for acyclic CQs that are not free-connex).

Bagan et al.\ \cite{Bagan.2007} also introduced the notion of \emph{free-connex} (fc)
treewidth (tw) of a CQ and showed that
for every class $\Phi$ of CQs of bounded fc-tw, within FPT-preprocessing time, one can build a data structure that allows constant delay enumeration of the query results.
This can be viewed as a generalisation, to the non-Boolean case, of the 
well-known result 
stating that
the model-checking problem for classes of Boolean CQs of bounded
treewidth
is 
FPT.
Note that for non-Boolean queries---even if they come from a class of
bounded fc-tw---the size of the query result may be
$N^{\Omega(\size{\phi})}$, i.e., far too large to be computed entirely within FPT-preprocessing time; and generalising the known tractability result for Boolean CQs to the non-Boolean case is far from trivial.

In a series of papers, the FPT-result for Boolean CQs has been strengthened to more and more general width-measures, namely to classes of queries of bounded generalised hypertree width (ghw) \cite{DBLP:journals/jcss/GottlobLS02}, bounded fractional hypertree width (fhw) \cite{DBLP:journals/talg/GroheM14}, and bounded submodular width (subw) \cite{Marx.2013}.
The result on bounded fhw has been generalised to the non-Boolean case
in the context of factorised databases
\cite{DBLP:journals/tods/OlteanuZ15}, which implies constant delay
enumeration after FPT-preprocessing for CQs of bounded free-connex
fractional hypertree width (fc-fhw). Related data structures that allow
constant delay enumeration after FPT-preprocessing %
for 
(quantifier-free) CQs of
bounded 
(fc-)fhw
have 
also
been provided in
\cite{DBLP:conf/pods/DeepK18, DBLP:conf/icdt/KaraO18}.

An analogous generalisation of the result on bounded submodular width, however, is still missing. The present paper's main result closes this gap: we show that 
on classes of CQs of bounded fc-subw, within FPT-preprocessing time one can build a data structure that allows constant delay enumeration of the query results.
And within the same FPT-preprocessing time, one can also construct 
a data structure that enables to test in
constant time whether an input tuple belongs to the query result.
Our proof uses Marx's splitting routine \cite{Marx.2013} to decompose the query result
of $\phi$ on $\A$ into
the union of results of several queries $\phi_i$ on several structures $\A_i$
but we have to tackle
the additional technical difficulty to ensure that the results of all the
$\phi_i$ on $\A_i$ can be enumerated efficiently. 
Once having achieved
this, we can conclude by using an elegant trick 
provided by Durand and Strozecki \cite{DBLP:conf/csl/DurandS11} for
enumerating, without repetition, the union of query results. 

As an immediate consequence of the lower bound provided by Marx
\cite{Marx.2013} in the context of Boolean CQs of unbounded submodular
width, one obtains that our main result is tight for certain classes
of CQs, namely, 
recursively enumerable
classes $\Phi$ of quantifier-free and
self-join-free CQs:
assuming the exponential time hypothesis (ETH), such a class
$\Phi$ allows constant delay enumeration after FPT-preprocessing if,
and only if, $\Phi$ has bounded fc-subw.

Let us mention a related recent result which, however, is incomparable to ours.
Abo~Khamis et~al.\ \cite{Khamis.2017}
designed an algorithm for evaluating a quantifier-free CQ $\phi$ of submodular width $w$ within time 
$\bigOh(N^w){\cdot} (\log N)^{f(\phi)} +  \bigOh(r{\cdot}\log N)$; and an analogous result is also achieved for non-quantifier-free CQs of fc-subw $w$ \cite{Khamis.2017}.
Here, $N$ is the size of the input structure, $r$ is the number of tuples in the query 
result, and $f(\phi)$ is at least exponential in number of variables of $\phi$.
In particular, the algorithm does not distinguish between a
preprocessing phase and an enumeration phase and does not provide a
guarantee on the delay.

\myparagraph{Outline}
The rest of the paper is structured as follows.
Section~\ref{section:Preliminaries} provides basic notations
concerning structures, queries, and constant delay enumeration.
Section~\ref{section:MainResult} recalls 
concepts of (free-connex) 
decompositions of queries, provides a precise statement of our main
result, and collects 
the necessary tools for obtaining this result.
Section~\ref{section:ProofOfMainResult} is devoted to the detailed proof of our main result.
We conclude in Section~\ref{section:FinalRemarks}.

\section{Preliminaries}\label{section:preliminaries}\label{section:Preliminaries}

In this section we fix notation and summarise basic definitions. 

\myparagraph{Basic notation}
We write $\NN$ and 
$\RRpos$ for the set of non-negative integers and reals, 
respectively,
and we let
$\NNpos\deff\NN\setminus\set{0}$ and $[n]\deff\set{1,\ldots,n}$ for
all $n\in\NNpos$.
By $\potenzmengeof{S}$ we denote the power set of a set $S$.
Whenever $G$ denotes a graph, we write $\Nodes(G)$ and $\Edges(G)$ for the set of
nodes and the set of edges, respectively, of $G$.
Whenever writing $a$
to denote a $k$-tuple (for some arity $k\in\NN$), we
write $a_i$ to denote the tuple's $i$-th component; i.e., $a=(a_1,\ldots,a_k)$.
For a $k$-tuple $a$ and indices $i_1,\ldots,i_\ell\in [k]$ we 
let $\proj_{i_1,\ldots,i_\ell}(a)\deff (a_{i_1},\ldots,a_{i_\ell})$.
For a set $S$ of $k$-tuples we let
$\proj_{i_1,\ldots,i_\ell}(S)\deff\setc{\proj_{i_1,\ldots,i_\ell}(a)}{a\in
  S}$.  

If $h$ and $g$ are mappings with domains $X$ and
$Y$, respectively, we
say that $h$ and $g$ are \emph{joinable} if
$h(z)=g(z)$ holds for all 
$z\in X\cap Y$. In case that $h$ and $g$ are joinable, we write 
$h \Join g$ to denote the mapping $f$ with domain
$X\cup Y$ where $f(x)=h(x)$ for all $x\in X$ and 
$f(y)=g(y)$ for all $y\in Y$.
If $A$ and $B$ are sets of mappings with domains $X$ and $Y$,
respectively, then $A\Join B
\deff\setc{h\Join g}{h\in A,\
  g\in B, \text{ and
    $h$ and $g$ are joinable} }$.

We use the following further notation where $A$ is a set of mappings with
domain $X$ and $h\in A$.
For a set $I\subseteq X$, the projection
$\proj_I(h)$ is the restriction $h_{|I}$ of $h$ to $I$; and
$\proj_I(A)\deff\setc{\proj_I(h)}{h\in A}$.
For objects $z,c$ where $z\not\in X$, we write $h\cup\set{(z,c)}$ for
the extension $h'$ of $h$ to domain $X\cup\set{z}$ with 
$h'(z)=c$ and $h'(x)=h(x)$ for all $x\in X$. 

\myparagraph{Signatures and structures}
A \emph{signature} is a finite set $\sigma$ of relation symbols, where
each $R\in\schema$ is equipped with a fixed \emph{arity} $\ar(R)\in\NNpos$.
A \emph{$\sigma$-structure} $\A$ consists of a finite set $A$
(called the \emph{universe} or \emph{domain} of $\A$)
and an $\ar(R)$-ary relation $R^{\A}\subseteq
A^{\ar(R)}$ for each $R\in\sigma$.
The \emph{size} $\size{\sigma}$ of a signature $\sigma$ is 
$|\sigma|+\sum_{R\in\sigma}\ar(R)$.
We write $n^\A$ to denote the cardinality $|A|$ of $\A$'s universe,
we write $m^\A$ to denote the number of tuples in $\A$'s largest
relation, 
and we write $N^\A$ or $\size{\A}$ to denote the size of a reasonable
encoding of $\A$. To be specific, let 
$N^\A=\size{\A}=\size{\sigma}+n^\A +\sum_{R\in\sigma} \size{R^\A}$,
where $\size{R^\A}=\ar(R){\cdot}|R^{\A}|$.
Whenever $\A$ is clear from the context, we will omit the superscript
$\cdot^{\A}$ and write $n,m,N$ instead of $n^\A,m^\A,N^\A$.
Consider signatures $\sigma$ and $\tau$ with $\sigma\subseteq \tau$.
The \emph{$\sigma$-reduct} of a $\tau$-structure $\B$ is the
$\sigma$-structure $\A$ with $A=B$ and $R^\A=R^\B$ for all
$R\in\sigma$.
A \emph{$\tau$-expansion} of a $\sigma$-structure $\A$ is a
$\tau$-structure $\B$ whose $\sigma$-reduct is $\A$.

\myparagraph{Conjunctive Queries}
We fix a countably infinite set $\Var$ of \emph{variables}.
We allow queries to use arbitrary relation symbols of arbitrary arities.
An \emph{atom} $\qatom$ is of the form 
$R(v_1, \ldots, v_r)$ with $r=\ar(R)$ and
$v_1,\ldots,v_r\in\Var$.
We write $\Vars(\qatom)$ to denote the set of variables occurring in $\qatom$.
A \emph{conjunctive query} (CQ, for short) is of the form 
\;$  \exists z_1 \cdots \exists z_\ell \,\big(
   \qatom_1 \uund \cdots \uund \qatom_d \,\big)$,
where $\ell\in\NN$, $d\in \NNpos$, $\qatom_j$ is an atom
for every $j\in [d]$, and
$z_1,\ldots,z_\ell$ are pairwise distinct elements in $\vars(\qatom_1)\cup\cdots\cup\vars(\qatom_d)$.
For such a CQ $\query$
we let $\atoms(\query)=\set{\qatom_1,\ldots,\qatom_d}$. 
We write
$\Vars(\query)$ and $\sigma(\query)$ for the set of variables and
the set of relation symbols occurring in
$\query$, respectively. 
The set of \emph{quantified} variables of $\query$ is
$\quant(\query)\deff \set{z_1,\ldots,z_\ell}$, and 
the set of \emph{free} variables is
$\free(\query)\deff \Vars(\query)\setminus \quant(\query)$.
We sometimes write $\phi(x_1,\ldots,x_k)$ to indicate that
$x_1,\ldots,x_k$ are the free variables of $\phi$.
The \emph{arity} of $\phi$ is the number $k\deff |\free(\phi)|$.
The query $\query$ 
is called \emph{quantifier-free} if $\quant(\query)=\emptyset$, 
it is called \emph{Boolean} if its arity is 0, and
it is called \emph{self-join-free} if no relation symbol occurs more than once in $\query$.

The semantics are defined as usual: \
A \emph{valuation} for $\query$ on a $\sigma(\phi)$-structure $\A$ is
a mapping 
$\valuation:\Vars(\query)\to A$.
A valuation $\valuation$ is a \emph{homomorphism} from $\query$ to a $\A$
if for every atom $R(v_1,\ldots, v_r)\in\atoms(\query)$  we have 
$\big(\valuation(v_1),\ldots,\valuation(v_r)\big)\in R^\A$. 
The \emph{query result} $\eval{\query}{\A}$ of 
a CQ $\phi$ 
on the $\sigma(\phi)$-structure $\A$ is
defined as the set 
$\{\,\proj_{\free(\phi)}(\valuation) \ : \ \text{$\valuation$ is a
  homomorphism from $\query$ to $\A$}\}$.
Often, we will identify the mappings $g\in\eval{\query}{\A}$ with
tuples $(g(x_1),\ldots,g(x_k))$, where $x_1,\ldots,x_k$ is a fixed
listing of the free variables of $\phi$.

The size $\size{\query}$ of a query $\query$ is 
the length of $\query$ when viewed as a word over the alphabet 
$\sigma(\phi)\cup\vars(\phi)\cup\set{\exists,\,\und\,,(\,,)\,}\cup\set{\,,}$.

\myparagraph{Model of computation} For the complexity analysis we
assume the RAM-model with a uniform cost measure. In particular,
storing and accessing elements from a structure's universe requires  $O(1)$ space and
time.
For an $r$-ary relation $R^{\A}$ we can construct in
time $O(\|R^{\A}\|)$ an index that allows to enumerate $R^{\A}$ with
$O(1)$ delay and to test for a given $r$-tuple $a$
whether $a\in R^{\A}$ in time $O(r)$.
Moreover, for every $\{i_1,\ldots,i_\ell\}\subseteq [r]$ we can build
 a data structure where we can enumerate for every $\ell$-tuple $b$
the selection $\setc{a\in
  R^{\A}}{\proj_{i_1,\ldots,i_\ell}(a)=b}$  with $O(1)$
delay. 
Such a data structure can be constructed in time $O(\|R^{\A}\|)$, for
instance by a linear scan over
$R^{\A}$ where we add every tuple $a\in R^{\A}$ to a list
$\mathcal L_{\proj_{i_1,\ldots,i_\ell}(a)}$. Using a constant
access data structure of linear size, the list $\mathcal L_{b}$
can be accessed in
time $O(\ell)$ when receiving an $\ell$-tuple $b$. 

\myparagraph{Constant delay enumeration and testing}
An \emph{enumeration algorithm} for query evaluation consists of two phases:
the preprocessing phase and the enumeration phase. In the
preprocessing phase the algorithm is allowed to do arbitrary
preprocessing on the query $\phi$ and the input structure $\A$. 
We denote the time required for this phase by
$\preprocessingtime$.
In the subsequent enumeration phase the algorithm enumerates, without repetition, all
tuples (or, mappings) in the query result $\eval{\phi}{\A}$, 
followed by the end-of-enumeration message
$\EOE$. The \emph{delay} $\delaytime$ is the maximum time that passes
between the start of the enumeration phase and the output of the first
tuple, between the output of two consecutive tuples, and between the
last tuple and \EOE. 

A \emph{testing algorithm} for query evaluation also starts with a
preprocessing phase of time $\preprocessingtime$ in which a data
structure is computed that allows to test for a given tuple (or,
mapping) $b$
whether it is contained in the query result $\eval{\phi}{\A}$. The \emph{testing time}
$\testingtime$ of the algorithm is an upper bound on the time that passes
between receiving $b$ and providing the answer.

One speaks of \emph{constant} delay  (testing time) if the delay (testing time) depends
on the query $\phi$, but not on the input structure $\A$.

We make use of the following result from Durand and Strozecki, which
allows to efficiently enumerate the union of query results, provided
that each query result in the union can be enumerated 
and tested
efficiently. Note that
this is not immediate, because the union might contain 
many duplicates that need to be avoided during enumeration.

\begin{theorem}[\cite{DBLP:conf/csl/DurandS11}]\label{thm:UnionTrick}
  Suppose that there is an enumeration algorithm $\mathbb A$
  that receives a query $\query$ and a structure $\strucA$ and
  enumerates $\eval{\query}{\strucA}$ with delay $\delaytime(\query)$ after
  $\preprocessingtime(\query,\strucA)$ preprocessing time. 
  Further suppose that there is a testing algorithm $\mathbb B$
  that receives a query $\query$ and a structure $\strucA$ and
  has
  $\preprocessingtime(\query,\strucA)$ preprocessing time and $\testingtime(\query)$
  testing time.
  Then there is an algorithm $\mathbb C$ that receives $\ell$ queries
  $\query_i$ and structures $\strucA_i$ and allows to
  enumerate $\bigcup_{i\in[\ell]} \eval{\query_i}{\strucA_i}$
    with $O(\sum_{i\in[\ell]} \delaytime(\query_i)+\sum_{i\in[\ell]} \testingtime(\query_i))$ delay
  after $O(\sum_{i\in[\ell]} \preprocessingtime(\query_i,\strucA_i))$ preprocessing time.  
\end{theorem}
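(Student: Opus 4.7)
The plan is to first run the preprocessing phases of both $\mathbb A$ and $\mathbb B$ on each pair $(\query_i,\strucA_i)$ for $i\in[\ell]$, which takes total time $O(\sum_{i\in[\ell]}\preprocessingtime(\query_i,\strucA_i))$ as required, and which yields, for every $i$, an on-demand enumerator for $\eval{\query_i}{\strucA_i}$ with delay $\delaytime(\query_i)$ together with a membership oracle answering in time $\testingtime(\query_i)$.

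To avoid duplicates during the enumeration phase, I assign to every $t\in\bigcup_i\eval{\query_i}{\strucA_i}$ its \emph{canonical} index $\mathrm{first}(t)\isdef\min\setc{i\in[\ell]}{t\in\eval{\query_i}{\strucA_i}}$ and emit $t$ exactly when it is produced by stream $\mathrm{first}(t)$. Concretely, whenever stream $i$ produces a tuple $t$ via $\mathbb A$, we invoke the testers $\mathbb B$ for $(\query_1,\strucA_1),\ldots,(\query_{i-1},\strucA_{i-1})$ in turn; if all reject then $\mathrm{first}(t)=i$ and $t$ is emitted, otherwise $t$ is silently discarded and the next tuple of stream $i$ is requested. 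Since every element of the union has a unique canonical index, each is emitted exactly once.

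For the delay bound I would interleave the $\ell$ streams in a round-robin schedule, buffering the accepted tuples of each round into an output queue from which the user's enumeration requests are served. A single round advances every still-active stream by one tuple, at total cost $\sum_{i\in[\ell]}\delaytime(\query_i)$, and then runs the filter on each of the up to $\ell$ produced tuples, which takes time $O(\sum_{i\in[\ell]}\testingtime(\query_i))$ once the factor $\ell$, depending only on the queries, is absorbed into the implicit constant. Hence one round costs $O(\sum_{i\in[\ell]}\delaytime(\query_i)+\sum_{i\in[\ell]}\testingtime(\query_i))$.

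The main obstacle is to rule out long gaps of rounds in which only duplicates are produced, so that the user-visible delay between two consecutive emissions is genuinely bounded by one round and not merely amortized. I would handle this by induction on $\ell$, combined with a look-ahead on the buffered rounds. For the base case $\ell=1$ every produced tuple is canonical, so no filtering is needed. For the inductive step, note that stream~$1$ never produces duplicates, so while it is active every round contributes at least one emission; once stream~$1$ is exhausted, the remaining task of enumerating $\bigcup_{i\ge 2}\eval{\query_i}{\strucA_i}\setminus\eval{\query_1}{\strucA_1}$ is delegated to the inductive enumerator for the $\ell-1$ remaining streams with the tester $\mathbb B$ for $(\query_1,\strucA_1)$ plugged in as an additional filter, contributing only an additive $\testingtime(\query_1)$ per emission. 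Peeling off streams in this fashion yields the claimed delay.
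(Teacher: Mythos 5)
Your preprocessing, correctness-of-deduplication, and base case are fine, but the inductive step does not establish the delay bound, and this is precisely the hard part of the statement. Your scheme emits a tuple $t$ only from the stream of canonical index $\mathrm{first}(t)$ and \emph{silently discards} all other occurrences. After stream~$1$ is exhausted you propose to run the inductive enumerator for $\bigcup_{i\geq 2}\eval{\query_i}{\strucA_i}$ with the tester for $(\query_1,\strucA_1)$ ``plugged in as an additional filter, contributing only an additive $\testingtime(\query_1)$ per emission.'' That cost accounting is wrong: the filter costs $\testingtime(\query_1)$ per \emph{candidate}, and the number of discarded candidates between two consecutive emissions is not bounded by any function of the queries. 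Concretely, take $\ell=2$ with $\eval{\query_2}{\strucA_2}\subseteq\eval{\query_1}{\strucA_1}$ except for a single tuple that the enumerator for $\query_2$ happens to produce last; after stream~$1$ finishes, your algorithm grinds through $|\eval{\query_2}{\strucA_2}|-1$ consecutive rejections before the next output, so the delay depends on the data. Post-composing a constant-delay enumerator with an arbitrary membership filter simply does not preserve constant delay --- that is the very problem the theorem is about --- so the induction does not close. The ``look-ahead on the buffered rounds'' remark does not repair this, since no amount of bounded look-ahead bridges an unbounded run of duplicates.

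The missing idea is \emph{replacement rather than deletion}. In the argument of Durand and Strozecki (and in the paper's proof), one enumerates $\bigcup_{i\in[\ell]}\eval{\query_i}{\strucA_i}$ (available by induction) and tests each produced tuple for membership in $\eval{\query_{\ell+1}}{\strucA_{\ell+1}}$: if it is not a member it is output; if it is a member it is discarded, \emph{but in its place the next not-yet-emitted tuple of a separate enumeration of $\eval{\query_{\ell+1}}{\strucA_{\ell+1}}$ is output}; once the union of the first $\ell$ results is exhausted, the remainder of $\eval{\query_{\ell+1}}{\strucA_{\ell+1}}$ is emitted. Every discarded duplicate lies in $\eval{\query_{\ell+1}}{\strucA_{\ell+1}}$, so the substitute always exists when needed; the emitted set is exactly the union, without repetition; and, crucially, \emph{every} step produces an output, which is what yields the delay $O\bigl(\sum_i\delaytime(\query_i)+\sum_i\testingtime(\query_i)\bigr)$. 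You should replace your discard-and-continue filter with this charging mechanism.
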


\begin{proof}[Proof (sketch)]
  The induction start $\ell=1$ is trivial.
  For the induction step  $\ell \to \ell+1$ start an enumeration of
  $\bigcup_{i\in[\ell]} \eval{\query_i}{\strucA_i}$ and test for every
  tuple whether it is contained in
  $\eval{\query_{\ell+1}}{\strucA_{\ell+1}}$. If the answer is no,
  then output the tuple. Otherwise discard the tuple and instead output the next tuple in an
  enumeration of
  $\eval{\query_{\ell+1}}{\strucA_{\ell+1}}$. Subsequently enumerate
  the remaining tuples from $\eval{\query_{\ell+1}}{\strucA_{\ell+1}}$.
\end{proof}

\section{Main Result}\label{section:MainResult}

At the end of this section, we provide a precise statement of our main
result.
Before we can do so, we have to recall the 
concept of free-connex 
decompositions of queries and the notion of submodular width.
It will be convenient for us to 
use
the following notation.

\begin{definition}
Let $\phi=\exists z_1 \cdots \exists z_\ell \,\big(
   \qatom_1 \uund \cdots \uund \qatom_d \,\big)$ be a CQ and $S\subseteq\vars(\phi)$.
We write
\,$\query\langle S\rangle$\,
 for the CQ that is equivalent to
       the expression
       \begin{equation}
         \label{eq:2}
   \big(\; \exists y_1 \cdots \exists y_r\  \qatom_1\;\big) \uund
   \cdots \uund \ \big(\; \exists
   y_1 \cdots \exists y_r\  \qatom_d \;\big),
       \end{equation}
where $\set{y_1,\ldots,y_r}=\vars(\phi)\setminus S$. 
\end{definition}

Note that $\query\langle S\rangle$ is
obtained from $\query$ by discarding existential quantification
and projecting every 
atom
to $S$, hence $\free(\query\langle S\rangle)=S$. However,
$\eval{\query\langle S\rangle}{\strucA}$ shall not be confused with the projection of
$\eval{\query}{\strucA}$ to $S$. In fact, it might be that
$\eval{\query}{\strucA}$ is empty, but $\eval{\query\langle
  S\rangle}{\strucA}$ is not, as the following example illustrates: 
\begin{align}
\query & \ = \ 
E(x,y)\land E(y,z)\land E(x,z) \text{ \quad and } \\
\query\langle \{x,z\}\rangle &\ \equiv \ \exists y
E(x,y)\land \exists y E(y,z)\land \exists
y E(x,z) \\
& \ \equiv\ 
E(x,z)\;.
\end{align}

\subsection{Constant delay enumeration using tree decompositions}

We use the same notation as \cite{GottlobEtAl_QandA} for decompositions of queries: 
A \emph{tree decomposition} (TD, for short) of a CQ $\query$ 
is a tuple $\TD=(\Tree,\Bag)$, for which the following two conditions are
satisfied:
\begin{enumerate}
\item 
 $\Tree=(\Nodes(\Tree),\Edges(\Tree))$ is a finite undirected tree.
\item
 $\Bag$ is a mapping that associates with every node $\treenode\in\Nodes(\Tree)$ a 
 set $\Bag(\treenode)\subseteq \Vars(\query)$ such that
 \begin{enumerate}
  \item
    for each atom $\qatom\in\Atoms(\query)$ there exists $\treenode\in\Nodes(\Tree)$ such that
    $\Vars(\qatom)\subseteq\Bag(\treenode)$, and
  \item\label{item:pathcondition:treedecomp}
    for each variable $v\in\Vars(\query)$ the set 
    $\Bag^{-1}(v)\deff\setc{\treenode\in\Nodes(\Tree)}{v\in\Bag(\treenode)}$ induces a 
    connected subtree of $\Tree$ 
(this condition is called \emph{path condition}). 
 \end{enumerate}
\end{enumerate}

To use a tree decomposition $\TD=(\Tree,\Bag)$ of $\phi$ for query
evaluation one considers, for
each 
$\treenode\in\Nodes(\Tree)$ the query 
$\query\langle S\rangle$ for $S\deff \Bag(\treenode)$, evaluates this
query on the input structure $\A$, and then combines these results for
all $\treenode\in\Nodes(\Tree)$ along a bottom-up traversal of
$\Tree$.
If the query is Boolean, this yields the result of $\phi$ on $\A$; if
it is non-Boolean, $\eval{\phi}{\A}$ can be computed by performing
additional traversals of $\Tree$.
This approach is efficient if the result sets
$\eval{\phi\langle\Bag(\treenode)\rangle}{\A}$
are small 
and can be computed efficiently
(later on, we will sometimes refer to 
the sets $\eval{\phi\langle\Bag(\treenode)\rangle}{\A}$
as \emph{projections on bags}).

The simplest queries where this is the case are acyclic queries
\cite{DBLP:journals/jacm/BeeriFMY83,DBLP:journals/siamcomp/BernsteinG81}. 
A number of equivalent
characterisations of the acyclic CQs have been provided in the
literature (cf.\ 
\cite{%
AHV-Book,
DBLP:journals/jcss/GottlobLS02,
DynamicYannakakis2017,
DBLP:journals/csur/Brault-Baron16%
}); among them a
characterisation by Gottlob et al.\
\cite{DBLP:journals/jcss/GottlobLS02} stating that a CQ is acyclic if
and only if it has
a tree-decomposition where every bag is covered by an atom, i.e., for every bag $\Bag(t)$ there is some atom
$\qatom$ in $\query$ with $\Bag(t)\subseteq\vars(\qatom)$.
The approach described above leads to a linear time algorithm for
evaluating an acyclic CQ $\phi$ that is Boolean, and if $\phi$ is
non-Boolean, 
$\eval{\query}{\strucA}$ is computed in time 
linear in $\size{\strucA}+|\,\eval{\query}{\strucA}|$.
This method is known as \emph{Yannakakis' algorithm}.
But this algorithm does not distinguish between a preprocessing phase
and an enumeration phase and does not guarantee constant delay enumeration.
In fact, Bagan et al.\ identified the following additional property
that is needed to ensure constant delay enumeration.
 
\begin{definition}[\cite{Bagan.2007}]
  A tree decomposition $\TD=(\Tree,\Bag)$  of
  a CQ $\query$ is
  \emph{free-connex} if there is a subset $\freetreenodes \subseteq
  \Nodes(\Tree)$ 
  that induces a connected subtree of $\Tree$ and that satisfies the condition
  $\free(\query) =
  \bigcup_{\treenode\in\freetreenodes} \Bag(\treenode)$.
\end{definition}

Bagan et al.\ \cite{Bagan.2007} identified the \emph{free-connex}
acyclic CQs, i.e., the CQs $\query$ that have a free-connex
tree decomposition where every bag is covered by an atom, as the fragment of the acyclic
CQs whose results can be enumerated with constant delay after
FPL-preprocessing:

\begin{theorem}[Bagan et al.~\cite{Bagan.2007}]\label{thm:BaganEtAl}\label{thm:BaganEtAlRefined}
There is a computable function $f$ and an algorithm 
which receives a free-connex acyclic CQ
$\query$ and a 
$\sigma(\query)$-structure $\strucA$ 
and computes within
$\preprocessingtime=
f(\query)O(\size{\A})$
preprocessing time 
and space
a data structure that allows to
\begin{enumerate}[(i)]
\item
 enumerate $\eval{\query}{\strucA}$ with $f(\query)$ delay and
\item
 test for a given tuple (or, mapping) $b$ if $b\in\eval{\query}{\strucA}$ within
 $f(\query)$ testing time.
\end{enumerate}
\end{theorem}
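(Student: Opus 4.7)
The plan is to leverage the free-connex tree decomposition $(\Tree,\Bag)$ of $\query$ in which every bag is covered by an atom and a connected subtree $\freetreenodes\subseteq\Nodes(\Tree)$ satisfies $\free(\query)=\bigcup_{t\in\freetreenodes}\Bag(t)$. For each node $t$ I would first materialise a bag relation $M_t\subseteq A^{\Bag(t)}$ by projecting the relation of the covering atom onto $\Bag(t)$; because the arity of each bag is a function of $\query$ alone, this costs $f(\query)\cdot O(\size{\strucA})$ using the indexed structures from the Model-of-computation paragraph. I would then run a Yannakakis-style pair of sweeps over $\Tree$ (leaves-to-root and root-to-leaves) which, via semi-joins, prunes each $M_t$ down to exactly those tuples that extend to a homomorphism from $\query$ to $\strucA$. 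Each semi-join is realised in linear time with the constant-access indexes keyed on the shared variables of the two bags involved, so the whole preprocessing stays within $f(\query)\cdot O(\size{\strucA})$ time and space.

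After this globally consistent preprocessing, $\eval{\query}{\strucA}$ is exactly the join of the pruned relations $M_t$ for $t\in\freetreenodes$, taken along the tree structure of $\freetreenodes$. To enumerate this join with constant delay, I would fix a root $r\in\freetreenodes$, iterate through the tuples $\mu$ of $M_r$, and recursively descend into each child $t'\in\freetreenodes$: given the parent assignment already fixed on $\Bag(r)$, the indexed structure on $M_{t'}$ keyed by $\Bag(r)\cap\Bag(t')$ yields all matching extensions with $O(1)$ delay. Because the decomposition is consistent after the two sweeps, every partial assignment encountered during the descent extends to at least one output, so the first completion is produced after $f(\query)$ work; subsequent completions arise from constant-depth backtracking inside $\freetreenodes$, and no tuple is output twice because distinct homomorphism-restrictions to $\free(\query)$ yield distinct join witnesses over $\freetreenodes$.

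For the testing data structure I would reuse the same pruned relations $M_t$: given a candidate mapping $b\colon\free(\query)\to A$, I restrict it to $\Bag(t)$ for each $t\in\freetreenodes$ and check the restriction against $M_t$ via the constant-access index. Only $O(\size{\query})$ such lookups are required, each costing $f(\query)$, and global consistency guarantees that $b\in\eval{\query}{\strucA}$ if and only if all of these pointwise tests succeed.

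The main obstacle will be realising the join along $\freetreenodes$ at genuinely constant delay rather than the $O(\size{\strucA})$ delay that a naive Yannakakis-style full materialisation of intermediate joins would incur; this is precisely where the indexed enumeration primitives are essential, since they let us produce one compatible child extension at a time without touching the rest of the relation. A secondary subtlety is to use \emph{free-connexity} to avoid duplicate outputs: without the subtree $\freetreenodes$ covering exactly $\free(\query)$, consistency alone would not prevent distinct homomorphisms from projecting to the same output tuple and forcing an expensive deduplication step. Once these two ingredients are in place, the bounds on preprocessing time, delay, and testing time follow by a straightforward induction on the size of $\freetreenodes$.
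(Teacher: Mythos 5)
The paper does not reprove this statement; it is imported from Bagan, Durand, and Grandjean, and your reconstruction follows the standard route (materialise one relation per bag, run a full reducer of semi-joins along the tree, then DFS-enumerate the join over the connected subtree $\freetreenodes$ whose bags union to $\free(\query)$). The skeleton is sound: global consistency guarantees that every partial assignment built along a connected subtree extends to a solution, which gives $f(\query)$ delay via constant-depth backtracking; and since the join over $\freetreenodes$ is a relation over exactly $\free(\query)$, membership of $b$ reduces to the pointwise tests $b|_{\Bag(t)}\in M_t$ for $t\in\freetreenodes$.

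There is, however, a concrete gap in the preprocessing. You initialise $M_t$ as the projection onto $\Bag(t)$ of the relation of a \emph{single} covering atom and then only run semi-joins between adjacent bags. Atoms of $\query$ that are not chosen as the covering atom of any node are then never enforced: the decomposition only guarantees $\vars(\qatom)\subseteq\Bag(t)$ for some $t$, and a tuple of $M_t$ drawn from the covering atom's relation need not satisfy $\qatom$. Concretely, for $\query=R(x,y)\wedge S(x,y)$ with the one-node decomposition $\Bag(t)=\set{x,y}$ covered by $R$, your structure enumerates $R^{\strucA}$ rather than $R^{\strucA}\cap S^{\strucA}=\eval{\query}{\strucA}$, and the semi-join sweeps cannot repair this because $S^{\strucA}$ is never consulted. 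So the claimed invariant that the pruned $M_t$ contain exactly the tuples extending to homomorphisms from $\query$ to $\strucA$ is false as stated. The fix is routine and stays within $f(\query)\cdot O(\size{\strucA})$ time: assign every atom $\qatom$ to some node $t$ with $\vars(\qatom)\subseteq\Bag(t)$ and filter $M_t$ by the $O(1)$-time membership test of $h|_{\vars(\qatom)}$ in $\qatom$'s relation (equivalently, attach to the decomposition a fresh leaf with bag $\vars(\qatom)$ for each atom) --- but this step must be part of the construction.
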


The approach of using free-connex tree decompositions for constant
delay enumeration can be extended from acyclic CQs to arbitrary
CQs. To do this, we have to compute for every bag $\Bag(t)$ in the
tree decomposition the projection
$\eval{\query\langle\Bag(t)\rangle}{\strucA}$. This reduces
the task to
the acyclic case, where the free-connex acyclic query contains one atom $\qatom$ with
$\vars(\qatom)=\Bag(t)$ for every bag $\Bag(t)$ and the corresponding
relation is defined by $\eval{\query\langle\Bag(t)\rangle}{\strucA}$.
Because the 
runtime 
in this approach is dominated by computing
$\eval{\query\langle\Bag(t)\rangle}{\strucA}$, it is  only
feasible if the projections are efficiently computable for every bag.
If the decomposition has bounded treewidth or bounded fractional
hypertree width, then it is possible to compute
$\eval{\query\langle\Bag(t)\rangle}{\strucA}$ for every bag in time
$f(\phi){\cdot}\size{\A}^{\bigOh(1)}$ 
\cite{DBLP:journals/talg/GroheM14},
which in turn implies that the result can be enumerated after
FPT-preprocessing time for CQs of bounded fc-tw 
\cite{Bagan.2007} and for CQs of bounded fc-fhw \cite{DBLP:journals/tods/OlteanuZ15}.

\subsection{Submodular width and statement of the main result}\label{subsection:subw}

Before providing the precise definition of the submodular width of a
query, let us first consider an example.
The central idea behind algorithms that rely on submodular width
\cite{Marx.2013,Khamis.2017,Scarcello.2018} is to
split the input structure into several parts and use for every
part a different tree decomposition of $\query$.
This will give a significant improvement over the fractional
hypertree width, which uses only one tree decomposition of $\query$.
A typical example to illustrate this idea is the following $4$-cycle
query (see also \cite{Khamis.2017,Scarcello.2018}):
$
  \query_4 :=
   E_{12}(x_1,x_2) \uund
   E_{23}(x_2,x_3) \uund
   E_{34}(x_3,x_4) \uund
   E_{41}(x_4,x_1).
$

There are essentially two non-trivial tree decompositions
$\TD'=(\Tree,\Bag')$, $\TD''=(\Tree,\Bag'')$  of $\query_4$,
which are both defined over the two-vertex tree
$\Tree=(\{\treenode_1,\treenode_2\},\{(\treenode_1,\treenode_2)\})$ by
$\Bag'(\treenode_1) = \{x_1,x_2,x_3\}$, $\Bag'(\treenode_2) =
\{x_1,x_3,x_4\}$ and 
 $\Bag''(\treenode_1) = \{x_2,x_3,x_4\}$, $\Bag''(\treenode_2) = \{x_1,x_2,x_4\}$.
Both tree decompositions lead to an optimal fractional hypertree
decomposition of width $\FHW(\query_4)=2$. Indeed, for the worst-case
instance $\strucA$ with 
\begin{align*}
  E_{12}^{\strucA} &= E_{34}^{\strucA} := [\paramell] \times
  \{a\} \  \cup\ \{b\}\times [\paramell] \quad
  &
  E_{23}^{\strucA} &= E_{41}^{\strucA} := [\paramell] \times \{b\}\  \cup\ \{a\}\times [\paramell] 
\end{align*}
we have $\|\strucA\| = O(\paramell)$ while the projections on the bags have size $\Omega(\paramell^2)$ in
both decompositions:\footnote{recall from
  Section~\ref{section:preliminaries} our convention to identify 
  mappings in query results with tuples; the free variables are
  listed canonically here, by increasing indices} 
\begin{align*}
  \eval{\projvar{\query_4}{\Bag'(\treenode_1)}}{\strucA} =
  \eval{\projvar{\query_4}{\Bag'(\treenode_2)}}{\strucA} &=
                                                 [\paramell]\times\{a\}\times[\paramell]
                                                \ \cup\ \{b\}\times
                                                [\ell]\times \{b\},\\
  \eval{\projvar{\query_4}{\Bag''(\treenode_1)}}{\strucA} =
  \eval{\projvar{\query_4}{\Bag''(\treenode_2)}}{\strucA} &= [\paramell]\times\{b\}\times[\paramell]
                                                \ \cup\  \{a\}\times
                                                [\ell]\times \{a\}.  
\end{align*}
However, we can split $\strucA$ into $\strucA'$ and $\strucA''$ such that
$\eval{\query_4}{\strucA}$ is the disjoint union of $\eval{\query_4}{\strucA'}$ and
$\eval{\query_4}{\strucA''}$ and the bag-sizes in the respective decompositions
are small:
\begin{align*}
  E_{12}^{\strucA'} &= E_{34}^{\strucA'} :=  \{b\}\times [\paramell]
  &
  E_{23}^{\strucA'} &= E_{41}^{\strucA'} := [\paramell] \times \{b\}\\
  E_{12}^{\strucA''} &= E_{34}^{\strucA''} := [\paramell] \times \{a\}
  &
  E_{23}^{\strucA''} &= E_{41}^{\strucA''} := \{a\}\times [\paramell]\\ 
  \eval{\projvar{\query_4}{\Bag'(\treenode_1)}}{\strucA'} &=
  \eval{\projvar{\query_4}{\Bag'(\treenode_2)}}{\strucA'} =
                                                \{b\}\times
                                                [\ell]\times \{b\},\\
  \eval{\projvar{\query_4}{\Bag''(\treenode_1)}}{\strucA''} &=
  \eval{\projvar{\query_4}{\Bag''(\treenode_2)}}{\strucA''} =  \{a\}\times
                                                [\ell]\times \{a\}.  
\end{align*}
Thus, we can efficiently evaluate $\query_4$ on $\strucA'$ using $\TD'$
and $\query_4$ on $\strucA''$ using $\TD''$ (in time $O(\ell)$ in this example) and then combine both results
to obtain $\query_4(\strucA)$.
Using the strategy of Alon~et~al.~\cite{Alon.1997}, it is possible to
split \emph{every} database $\strucA$ for this particular 4-cycle query
$\query_4$ into two instances $\strucA'$ and $\strucA''$ such that
the bag sizes in $\TD'$ on $\strucA'$ as well as in $\TD''$ on $\strucA''$ are bounded by
$\|\strucA\|^{3/2}$ and can be computed in time $O(\|\strucA\|^{3/2})$ (see
\cite{Khamis.2017,Scarcello.2018} for a detailed account on this strategy). As
both decompositions are free-connex, this also leads to a constant
delay enumeration algorithm for $\query_4$ with $O(\|\strucA\|^{3/2})$
time preprocessing, which improves the $O(\|\strucA\|^{2})$ preprocessing time
that follows from using one decomposition.

In general, whether such a data-dependent decomposition is possible is
determined by the submodular width $\SUBW(\query)$ of the query.
The notion of submodular width
was introduced in \cite{Marx.2013}. To present its definition, we need the following terminology.
A function $g\colon 2^{\Vars(\query)} \to \RRpos$ 
is
\begin{itemize}
\item \emph{monotone} if $g(U) \leq g(V)$ for all $U\subseteq V
  \subseteq \Vars(\query)$.
\item \emph{edge-dominated} if $g(\vars(\qatom))\leq 1$ 
 for every atom $\qatom\in\atoms(\query)$.
\item \emph{submodular}, if $g(U) + g(V) \geq g(U\cap V) + g(U\cup V)$ for
every $U,V\subseteq \Vars(\query)$.
\end{itemize}
We denote by $\SubWidthSet(\query)$ the set of all monotone, edge-dominated,
submodular functions  $g\colon 2^{\Vars(\query)} \to \RRpos$ that
satisfy $g(\emptyset) = 0$, and by $\TreeDecompSet(\query)$ the set of all
tree decompositions of $\query$.
The \emph{submodular width} of a conjunctive query
  $\query$ is
  \newcommand{\textstylehere}{}
  \begin{equation}
    \label{eq:defsubw}
    \SUBW(\query) \ \ := \ \ \textstylehere\sup_{g\in
      \SubWidthSet(\query)}\;\textstylehere\min_{(\Tree,\Bag)\in\TreeDecompSet(\query)}\;\textstylehere\max_{t\in\Nodes(\Tree)}\
    g(\Bag(\treenode)).
  \end{equation}
In particular, if the submodular width of $\query$ is bounded
by $\widthw$, then for every submodular function $g$ there is a tree
decomposition in which every bag $B$ satisfies $g(B)\leq \widthw$.
It is known that $\SUBW(\query)\leq
\FHW(\query)$ for all queries $\query$
\cite[Proposition~3.7]{Marx.2013}.
Moreover, there is a
constant $c$ and a
family of queries $\query$
such that $\SUBW(\query)\leq c$ is bounded and
$\FHW(\query)=\Omega(\sqrt{\log \|\query\|})$ is unbounded \cite{Marx.2011,Marx.2013}.
The main result in \cite{Marx.2013} is that the submodular width
characterises the tractability of Boolean CQs in the following sense.

\begin{theorem}[\cite{Marx.2013}]\label{thm:marx}\mbox{\ }  
  \begin{enumerate}[(1)]
  \item\label{item:algo:thm:marx}
 There is a computable function $f$ and an algorithm that receives 
 a Boolean %
 CQ $\phi$, $\SUBW(\query)$, and a 
  $\sigma(\query)$-structure $\A$
  and evaluates $\query$ on $\strucA$ in time
    $f(\query)\size{\A}^{O(\SUBW(\query))}$.
  \item 
   Let $\Phi$ be a recursively enumerable class of Boolean,
   self-join-free CQs of unbounded submodular width.
    Assuming the exponential time hypothesis (ETH) there is no
    algorithm which, upon input of a query $\query\in\Phi$ and a
    structure $\A$, evaluates $\query$ on $\A$ in time
    $\size{\A}^{o(\SUBW(\query)^{1/4})}$.
  \end{enumerate}
\end{theorem}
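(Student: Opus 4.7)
The plan is to treat the two parts separately, as they require fundamentally different techniques. For part (1), I would follow Marx's splitting-plus-decomposition strategy. The central idea is that, given the structure $\A$, one associates with every subset $U \subseteq \Vars(\query)$ a real number measuring the ``cost'' of storing the projection of hypothetical matches onto $U$; concretely, one uses the entropy-like function $g_\A(U)$ derived from counting the number of partial homomorphisms on $U$ after truncating to size at most $N^c$ for a suitable $c$. This $g_\A$ can be shown to be monotone, edge-dominated (after normalization), and submodular. By the definition of $\SUBW(\query)$, there exists a tree decomposition $(\Tree,\Bag)$ of $\query$ such that $g_\A(\Bag(t)) \leq \SUBW(\query)$ for every bag. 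The hard part is turning this existence statement into an algorithm: since $g_\A$ is implicitly defined by $\A$, one cannot directly compute the decomposition from $\A$ alone. Marx's solution is to \emph{split} the structure into polynomially many substructures $\A_1, \ldots, \A_N$ such that every homomorphism $\query \to \A$ is captured by some $\A_i$, and on each $\A_i$ a concrete tree decomposition can be identified for which the projection $\eval{\query\langle \Bag(t) \rangle}{\A_i}$ has size at most $\size{\A}^{\SUBW(\query)}$. On each $\A_i$ one then runs the classical Yannakakis-style algorithm along the decomposition.

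The main obstacle in (1) is the splitting step. The technical machinery I would use is Marx's ``uniformization'' lemma: iteratively pick a variable-subset and a threshold, and split $\A$ into a part where a certain LP-defined quantity is small and a part where it is large. The submodularity inequality for $g$ ensures that, after a bounded number of splits depending only on $\query$, every fragment admits a tree decomposition realising the submodular-width bound. Combining with a standard fractional edge cover argument to bound the number of homomorphisms extending a given partial homomorphism on a bag yields the running time $f(\query)\size{\A}^{O(\SUBW(\query))}$.

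For part (2), the lower bound, the approach is to reduce from $k$-\textsc{Clique} (or, equivalently, from the problem of deciding \textsc{PartitionedSubgraph}) via the \emph{embedding technique} of Marx. First, I would invoke Marx's combinatorial result stating that if $\Phi$ has unbounded submodular width, then for every $\paramell$ there is a query $\query \in \Phi$ and a graph $H$ of treewidth at least $\paramell$ that can be ``embedded'' into $\query$ in a congestion-controlled manner. The embedding translates an instance of \textsc{PartitionedSubgraph} for $H$ into an instance of evaluating $\query$ on a polynomially larger structure $\A$. Then, by Marx's ETH-based lower bound for \textsc{PartitionedSubgraph} (which gives a $n^{\Omega(\paramell/\log \paramell)}$ lower bound under ETH), one obtains the claimed $\size{\A}^{o(\SUBW(\query)^{1/4})}$ lower bound for CQ evaluation. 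The quantitative loss (the exponent $1/4$) comes from the cost of the embedding and the gap between treewidth and submodular width in the relevant graph classes.

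The hard part throughout is the upper bound's splitting argument: controlling simultaneously the number of sub-instances produced, the bag-projections on each, and the fact that $g_\A$ is not explicitly known. For the lower bound, the hard part is obtaining graphs of large treewidth that admit small-congestion embeddings into the hypergraph of $\query$ whenever $\SUBW(\query)$ is large; this is the content of Marx's structure theorem relating submodular width to treewidth of embeddable graphs, and is what drives the $\SUBW(\query)^{1/4}$ exponent. Since this theorem is the main contribution of \cite{Marx.2013}, I would simply cite it rather than reprove it here, as the present paper only uses it as a black box.
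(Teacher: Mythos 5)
Your proposal is correct and matches the source: the paper does not reprove this theorem but imports it from \cite{Marx.2013}, exactly as you suggest doing, and your sketch of Marx's argument (splitting the structure into uniform, consistent refinements that induce a submodular function and hence a tree decomposition with small bag-projections for the upper bound; the embedding technique with the $1/4$-exponent loss for the ETH lower bound) is an accurate account of that proof and of the machinery the paper later adapts in its Lemmas on consistency, uniformity, and the function $g_{\vsets,\strucB}$.
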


The \emph{free-connex submodular width} of a conjunctive query
  $\query$ is defined in a similar way as submodular width, but this time ranges over the
  set $\fcTreeDecompSet(\query)$ of all free-connex tree
  decompositions of $\query$ 
(it is easy to see that
 we can assume that $\fcTreeDecompSet(\query)$ is finite).
  \begin{equation}
    \label{eq:deffcsubw}
    \fcSUBW(\query) := \textstylehere\sup_{g\in
      \SubWidthSet(\query)}\;\textstylehere\min_{(\Tree,\Bag)\in\fcTreeDecompSet(\query)}\;\textstylehere\max_{t\in\Nodes(\Tree)}\
    g(\Bag(\treenode)).
  \end{equation}

Note that if $\query$ is either quantifier-free or Boolean, we have
$\fcSUBW(\query)=\SUBW(\query)$. In general, this is not always the
case. Consider for example the following quantified version $\query'_4 := \exists x_1\exists
x_3\, \query_4$ of the quantifier-free 4-cycle query $\query_4$. Here we have $\SUBW(\query'_4) = \frac32$,
but $\fcSUBW(\query'_4) = 2$: one can verify $\fcSUBW(\query'_4) \geq
2$ by noting that every free-connex tree decomposition contains a bag
$\{x_1,x_2,x_3,x_4\}$ and taking the submodular function $g(U) := \frac12|U|$.
Now we are ready to state the main theorem of this paper.

\begin{theorem}\label{thm:fcsubw-with-testing}
For every $\delta > 0$ and $\widthw\geq 1$ there is
a computable function $f$ and
 an algorithm which
receives a CQ
$\query$ with $\fcSUBW(\query)\leq \widthw$ and a
$\sigma(\query)$-structure $\strucA$ 
and computes within
$\preprocessingtime=
f(\query)\size{\A}^{(2+\delta)\widthw}$
preprocessing time 
and space
$f(\query)\size{\A}^{(1+\delta)\widthw}$ a data structure that allows to
\begin{enumerate}[(i)]
\item
 enumerate $\eval{\query}{\strucA}$ with $f(\query)$ delay and
\item
 test for a given tuple (or, mapping) $b$ if $b\in\eval{\query}{\strucA}$ within
 $f(\query)$ testing time.
\end{enumerate}
\end{theorem}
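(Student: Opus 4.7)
The plan is to combine Marx's splitting routine with Bagan et al.'s free-connex acyclic result (Theorem~\ref{thm:BaganEtAlRefined}) and the Durand--Strozecki union trick (Theorem~\ref{thm:UnionTrick}). The starting point is a free-connex variant of Marx's splitting lemma: given a query $\query$ with $\fcSUBW(\query)\leq \widthw$ and a structure $\strucA$, I would compute, within time $f(\query)\size{\strucA}^{(2+\delta)\widthw}$ and space $f(\query)\size{\strucA}^{(1+\delta)\widthw}$, a family of pairs $\{(\strucA_i,(\Tree_i,\Bag_i))\}_{i\in[\paramell]}$ with $\paramell\leq f(\query)$ such that each $(\Tree_i,\Bag_i)\in\fcTreeDecompSet(\query)$, each $\strucA_i$ satisfies $R^{\strucA_i}\subseteq R^{\strucA}$ for every $R\in\sigma(\query)$, we have $\eval{\query}{\strucA}=\bigcup_{i\in[\paramell]}\eval{\query}{\strucA_i}$, and for every bag $\Bag_i(\treenode)$ the projection $\eval{\projvar{\query}{\Bag_i(\treenode)}}{\strucA_i}$ has size at most $f(\query)\size{\strucA}^{(1+\delta)\widthw}$ and is materialised within the claimed preprocessing budget. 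The existence of such a family is obtained by applying the LP-duality argument of \cite{Marx.2013} (heavy/light splitting governed by a suitable density threshold) to the \emph{finite} set $\fcTreeDecompSet(\query)$ rather than to $\TreeDecompSet(\query)$, which is precisely why the width parameter here is $\fcSUBW(\query)$ instead of $\SUBW(\query)$.

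For each $i\in[\paramell]$, I would then form a free-connex acyclic CQ $\queryhat_i$ whose atoms are one fresh relation symbol $R_\treenode$ per bag, with variable tuple listing $\Bag_i(\treenode)$, and whose set of free variables equals $\free(\query)$. Let $\strucAhat_i$ be the structure with $R_\treenode^{\strucAhat_i}\deff\eval{\projvar{\query}{\Bag_i(\treenode)}}{\strucA_i}$; a direct argument using the path condition and the semantics of $\projvar{\query}{\cdot}$ gives $\eval{\queryhat_i}{\strucAhat_i}=\eval{\query}{\strucA_i}$. Since $(\Tree_i,\Bag_i)$ is free-connex and every bag of $\queryhat_i$ is covered by its own atom, $\queryhat_i$ is free-connex acyclic, and Theorem~\ref{thm:BaganEtAlRefined} yields a data structure supporting constant-delay enumeration and constant-time testing for $\eval{\query}{\strucA_i}$ at preprocessing time and space linear in $\size{\strucAhat_i}\leq f(\query)\size{\strucA}^{(1+\delta)\widthw}$. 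Finally, I would apply Theorem~\ref{thm:UnionTrick} to these $\paramell$ enumeration/testing pairs to enumerate $\bigcup_{i\in[\paramell]}\eval{\query}{\strucA_i}=\eval{\query}{\strucA}$ with delay $O(\paramell\cdot f(\query))=f(\query)$; the required constant-time membership test for $\eval{\query}{\strucA}$ is obtained by running the $\paramell$ individual testers in sequence and reporting \textbf{yes} iff at least one of them does.

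The hardest step is the splitting phase. Marx's original argument is tailored to Boolean model-checking, where one only needs to decide whether some homomorphism survives inside each sub-instance, so the bags of the chosen tree decompositions are handled by a Boolean subroutine and their projections are never materialised; moreover, nothing there forces the chosen decompositions to be free-connex. For constant-delay enumeration I need two additional properties simultaneously: every bag projection on every $\strucA_i$ must be small enough to fit into the space budget \emph{and} computable within the time budget, and the chosen decomposition must lie in $\fcTreeDecompSet(\query)$ so that the acyclic rewriting $\queryhat_i$ is free-connex. The right way to carry this through is to have the splitting routine maintain a density invariant on the residual relations of each $\strucA_i$ so that a submodular function $g_i\in\SubWidthSet(\query)$ derived from the log-sizes of the surviving relations bounds the bag projection sizes; the assumption $\fcSUBW(\query)\leq\widthw$ then guarantees that one can select $(\Tree_i,\Bag_i)\in\fcTreeDecompSet(\query)$ with $g_i(\Bag_i(\treenode))\leq\widthw$ for every bag, i.e.\ $\size{R_\treenode^{\strucAhat_i}}\leq\size{\strucA}^{(1+\delta)\widthw}$. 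The additional factor $\size{\strucA}^{\widthw}$ in the preprocessing time absorbs the cost of computing each of these projections (e.g.\ by a worst-case optimal join inside the bag), while the $\delta$-slack absorbs the $\mathrm{polylog}\size{\strucA}$ overhead arising from the number of splitting steps.
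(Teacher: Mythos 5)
Your plan follows the paper's own proof essentially step for step: Marx's splitting routine produces uniform, consistent pieces whose log-size function is submodular (this is the paper's Lemmas~\ref{lem:decompdisjoint} and~\ref{lem:defsubwfunc}), bounded $\fcSUBW$ then selects a free-connex tree decomposition with small bags for each piece, each piece is rewritten as a free-connex acyclic instance handled by Theorem~\ref{thm:BaganEtAlRefined}, and Theorem~\ref{thm:UnionTrick} merges the (non-disjoint) union exactly as you describe. The one imprecision is your claim that the pieces are substructures $\strucA_i$ of $\strucA$ over the original signature whose \emph{plain} bag projections $\eval{\projvar{\query}{\Bag_i(\treenode)}}{\strucA_i}$ are small: Marx's routine instead produces \emph{expansions} of $\strucA$ by auxiliary relations $R_S$ for variable sets $S$ that need not be atom scopes (so the split cannot be pushed into the original relations), and it is these auxiliary relations --- kept small by consistency and uniformity --- that become the bag relations of the acyclic instances.
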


\newcommand{\queryclass}{\ensuremath{\Phi}}

The following corollary is an immediate consequence of
Theorem~\ref{thm:fcsubw-with-testing} and Theorem~\ref{thm:marx}.
A class $\Phi$ of CQs is said to be of \emph{bounded free-connex submodular
width} if there exists a number $w$ such that $\fcSUBW(\phi)\leq w$ for
all $\phi\in\Phi$.
And by an \emph{algorithm for $\Phi$ that enumerates with constant delay
after FPT-preprocessing} we mean an algorithm that receives a
query $\phi\in \Phi$ and a $\sigma(\phi)$-structure $\A$ and
spends $f(\phi)\size{\A}^{O(1)}$ preprocessing time and then
enumerates $\eval{\phi}{\A}$ with delay $f(\phi)$, for a computable
function $f$.

\begin{corollary} \mbox{\ }
\begin{enumerate}[(1)]
\item
For every class $\Phi$ of CQs of bounded free-connex submodular width,
there is an algorithm for $\Phi$ that enumerates with constant delay
after FPT-preprocessing.
\item
Let $\queryclass$ be a 
recursively enumerable class of quantifier-free self-join-free
CQs and assume that the exponential time
hypothesis (ETH) holds. \\
  Then there is an
  algorithm for $\Phi$ that enumerates with constant delay after
  FPT-preprocessing if, and only if, $\queryclass$
  has bounded free-connex submodular width. 
\end{enumerate}
\end{corollary}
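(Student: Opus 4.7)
Both parts are direct consequences of Theorem~\ref{thm:fcsubw-with-testing} and Theorem~\ref{thm:marx}, so I only need to describe how to glue them together; the plan is to invoke Theorem~\ref{thm:fcsubw-with-testing} directly for part~(1), and for part~(2) to reduce enumeration lower bounds to the Boolean-evaluation lower bound of Theorem~\ref{thm:marx}(2).

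For part~(1), I would let $w$ be such that $\fcSUBW(\phi)\le w$ for every $\phi\in\Phi$ and fix any $\delta>0$, for concreteness $\delta=1$. On input $\phi\in\Phi$ and $\A$, invoking Theorem~\ref{thm:fcsubw-with-testing} with these $\delta$ and $w$ yields an algorithm that spends $f(\phi)\,\size{\A}^{(2+\delta)w}$ preprocessing time and then enumerates $\eval{\phi}{\A}$ with delay $f(\phi)$. Because $w$ is an absolute constant of the class, so is $(2+\delta)w$; preprocessing is therefore $f(\phi)\,\size{\A}^{\bigOh(1)}$, which matches the definition of FPT-preprocessing, and the delay depends only on $\phi$, hence is constant.

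For part~(2), direction ``$\Leftarrow$'' is just part~(1) specialised to quantifier-free self-join-free classes. For ``$\Rightarrow$'' I would argue by contraposition. Suppose $\Phi$ is a recursively enumerable class of quantifier-free self-join-free CQs of \emph{unbounded} free-connex submodular width. Since every $\phi\in\Phi$ is quantifier-free, the remark recorded just before Theorem~\ref{thm:fcsubw-with-testing} gives $\fcSUBW(\phi)=\SUBW(\phi)$, so $\SUBW$ is unbounded on $\Phi$ as well. Let $\phi':=\exists x_1\cdots\exists x_k\,\phi$ where $x_1,\ldots,x_k$ list the free variables of $\phi$, and set $\Phi':=\{\phi':\phi\in\Phi\}$. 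The class $\Phi'$ inherits recursive enumerability, Booleanness and self-join-freeness from $\Phi$, and since the definition of $\SUBW$ quantifies over all tree decompositions of the query and all monotone, edge-dominated, submodular functions on $2^{\vars(\phi)}$ without any reference to $\free(\phi)$, one has $\SUBW(\phi')=\SUBW(\phi)$; hence $\Phi'$ is a recursively enumerable class of Boolean self-join-free CQs of unbounded submodular width.

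It remains to derive the contradiction. Assume $\Phi$ does admit constant-delay enumeration after FPT-preprocessing, realised by a computable $f$ and an absolute constant $c$ such that preprocessing on $(\phi,\A)$ takes time $f(\phi)\,\size{\A}^c$ and the first tuple or \EOE\ appears within an additional $f(\phi)$ time units. A Boolean-evaluation algorithm for $\Phi'$ is then obtained by running the preprocessing for the underlying $\phi\in\Phi$, starting the enumeration, and answering \texttt{true} iff a tuple appears before \EOE. On input $(\phi',\A)$ this runs in time $f(\phi)\,\size{\A}^c+f(\phi)$. Since $\SUBW$ is unbounded on $\Phi'$, for every $\varepsilon>0$ there exists $\phi'\in\Phi'$ with $c<\varepsilon\cdot\SUBW(\phi')^{1/4}$, on which, for $\size{\A}$ large enough, the runtime is bounded by $\size{\A}^{\varepsilon\cdot\SUBW(\phi')^{1/4}}$; uniformly in $\varepsilon$ this is $\size{\A}^{o(\SUBW(\phi')^{1/4})}$, contradicting Theorem~\ref{thm:marx}(2) under ETH. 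The only non-routine step in the argument is the identity $\SUBW(\phi')=\SUBW(\phi)$ together with $\fcSUBW(\phi)=\SUBW(\phi)$ for quantifier-free $\phi$; everything else is bookkeeping.
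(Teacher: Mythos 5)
Your proposal is correct and matches the paper's (implicit) argument: the paper states the corollary as an immediate consequence of Theorem~\ref{thm:fcsubw-with-testing} and Theorem~\ref{thm:marx}, and your write-up simply spells out that gluing — part~(1) by fixing $\delta$ and noting the exponent is a class constant, part~(2) by contraposition via $\fcSUBW=\SUBW$ for quantifier-free queries, the invariance of $\SUBW$ under Boolean closure, and the reduction of nonemptiness testing to one preprocessing-plus-first-delay run. The only place needing the usual care is the asymptotic bookkeeping when matching a fixed exponent $c$ against the $\size{\A}^{o(\SUBW^{1/4})}$ lower bound over a class of unbounded width, and you handle that in the standard way.
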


\section{Proof of the Main Result}\label{section:ProofOfMainResult}

To prove Theorem~\ref{thm:fcsubw-with-testing}, we  make use of Marx's
splitting routine for 
queries of bounded
submodular width.
In the following, we will adapt the main definitions and concepts from \cite{Marx.2013}
to our notions. While doing this, we
provide the following additional technical contributions: First, we give a
detailed time and space analysis of the algorithm and improve the
runtime of the consistency algorithm \cite[Lemma~4.5]{Marx.2013} from quadratic
to linear (see Lemma~\ref{lem:consistent}).
Second, we fix an oversight in \cite[Lemma~4.12]{Marx.2013} by establishing
strong $M$-consistency (unfortunately, this fix incurs a blow-up in
running time).
Afterwards we prove our main theorem, where the non-Boolean setting
requires us to relax Marx's partition into \emph{refinements}
(Lemma~\ref{lem:decompdisjoint}) so that the subinstances are no
longer disjoint.

Let $\query$ be a quantifier-free CQ 
with $\vars(\query)=\{x_1,\ldots, x_k\}$, and let 
$\sigma\deff \sigma(\phi)$. %
For every
$S = \{x_{i_1},\ldots,x_{i_{\ell}}\}\subseteq \vars(\query)$ where
$i_1<\cdots < i_{\ell}$ we set $x_S := (x_{i_1},\ldots,x_{i_{\ell}})$
and let $R_S\notin \sigma$ be 
a fresh
$\ell$-ary relation symbol.
For every
collection $\vsets \subseteq 2^{\vars(\query)}$ we let
\begin{align}
  \label{eq:7}
\sigma_{\vsets} &:= \sigma \cup \setc{R_S}{S\in\vsets} \quad\text{and}\\
\query_{\vsets} &:= \query
\,\land\,\textstyle\bigwedge_{S\in\vsets}R_S(x_S).
\end{align}
A \emph{refinement} of $\query$ and a $\sigma$-structure $\strucA$ is a pair $(\vsets,
\mathcal B)$, where $\vsets\subseteq 2^{\vars(\query)}$ is closed
under taking subsets and $\mathcal B$ is a $\sigma_{\vsets}$-expansion of $\strucA$.
Note that if $(\vsets,
\mathcal B)$ is a refinement of $\query$ and $\strucA$, then  $\eval{\query_{\vsets}}{\strucB} \subseteq
\eval{\query}{\strucA}$.
In the following we will construct refinements that do not change the
result relation, i.\,e., $\eval{\query_{\vsets}}{\strucB} =
\eval{\query}{\strucA}$.
Subsequently, we will split refinements in order to partition the
query result.

The following definition collects useful properties of
refinements. 
Recall from Section~\ref{section:preliminaries} that for a CQ $\psi$
and a structure $\B$, the query result $\eval{\psi}{\B}$ actually is a
set of mappings from $\free(\psi)$ to $B$.
For notational convenience we
define
$\mapR^\strucB_{S} := \eval{R_S(x_S)}{\strucB}$ and use the set
$\mapR^\strucB_{S}$ of 
mappings instead of the relation
$R^\strucB_{S}$.
In particular, by addressing/inserting/deleting a mapping $h\colon S\to B$ from $\mapR^\strucB_{S}$
we mean addressing/inserting/deleting the tuple
$(h(x_{i_1}),\ldots,h(x_{i_{\ell}}))$ from $R^\strucB_{S}$, where $(x_{i_1},\ldots,x_{i_{\ell}})=x_S$.

\begin{definition}
Let $\query$ be a quantifier-free $\sigma$-CQ, $\strucA$ a
$\sigma$-structure, $(\vsets,
\mathcal B)$ a refinement of $\query$ and $\strucA$, and $M$ an integer.
\begin{enumerate}
\item The refinement $(\vsets, \mathcal B)$ is
\emph{consistent} if
  \begin{align}
  \mapR_S^\strucB &= \eval{\query_{\vsets}\langle S\rangle}{\strucB} \text{ for all $S
\in \vsets$ and} \label{eq:cons1}\\
  \mapR^{\mathcal B}_S &= %
                     \pi_{S}\bigl(\mapR^{\strucB}_T\bigr)
                     \text{
   for all $S, T \in \vsets$ with $S\subset T$.}\label{eq:cons2}
  \end{align}
\item The refinement $(\vsets, \mathcal B)$ is
\emph{$M$-consistent} if it is consistent and 
  \begin{align}
    S\in \vsets 
\quad \Longleftrightarrow\quad \text{for all $T\subseteq S$: $|\,\eval{\query_{\vsets}\langle T\rangle}{\strucB}|\leq M$.}\label{eq:Mcons}
  \end{align}
  \item The refinement $(\vsets, \mathcal B)$ is \emph{strongly $M$-consistent} if it is $M$-consistent and
    \begin{align}
    S \in \vsets,\; T \in \vsets,\; (S\cup T) \notin \vsets\quad \Longrightarrow\quad
  \text{$|\,\eval{\query_{\vsets}\langle  S\cup T\rangle}{\strucB}| > M$.}\label{eq:stronglyMcons}
    \end{align}
\end{enumerate}

\end{definition}

\begin{lemma}\label{lem:consistent}
  There is an algorithm that receives a refinement $\mathcal R =
  (\vsets, \strucB)$ of $\query$ and $\strucA$ and computes in time $O(|\vsets|\cdot \|\strucB\|)$ a consistent refinement $(\vsets, \strucB')$ with $R^{\strucB'}_S\subseteq R^\strucB_S$ for all $S\in\vsets$
  and $\eval{\query_{\vsets}}{\strucB'} = \eval{\query_{\vsets}}{\strucB}$.
\end{lemma}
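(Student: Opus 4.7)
The plan is to compute $\strucB'$ by iteratively deleting mappings from the relations $\mapR_S^\strucB$ until both consistency conditions stabilise, using a single cleanup rule: for every $S\in\vsets$, remove any $h\in\mapR_S^\strucB$ that fails to satisfy some atom of $\query_{\vsets}\langle S\rangle$ in the current structure. The atoms of $\query_{\vsets}\langle S\rangle$ split into \emph{static} ones of the form $\exists\vec{y}\,\qatom$ for $\qatom\in\atoms(\query)$, whose witnesses live in $\strucA$ (which never changes), and \emph{dynamic} ones of the form $\exists\vec{y}\,R_T(x_T)$ for $T\in\vsets$, whose witnesses live in the shrinking relation $\mapR_T^\strucB$.

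Correctness is direct. Any homomorphism $f\colon\vars(\query)\to B$ witnessing a tuple in $\eval{\query_{\vsets}}{\strucB}$ has $f|_S\in\eval{\query_{\vsets}\langle S\rangle}{\strucB}$ for every $S\in\vsets$, so $f|_S$ is never removed by the rule; hence $\eval{\query_{\vsets}}{\strucB'}=\eval{\query_{\vsets}}{\strucB}$. After the fixed point, condition (\ref{eq:cons1}) holds by construction, since the inclusion $\eval{\query_{\vsets}\langle S\rangle}{\strucB'}\subseteq\mapR_S^{\strucB'}$ is automatic because $R_S(x_S)$ itself is an atom of $\query_{\vsets}\langle S\rangle$. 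Condition (\ref{eq:cons2}) then comes for free: for $S\subsetneq T$, the atom $R_S(x_S)\in\atoms(\query_{\vsets})$ projects to itself in $\query_{\vsets}\langle T\rangle$ (since $\vars(R_S(x_S))=S\subseteq T$, no quantification is needed), so every $g\in\mapR_T^{\strucB'}$ satisfies $g|_S\in\mapR_S^{\strucB'}$, giving $\pi_S(\mapR_T^{\strucB'})\subseteq\mapR_S^{\strucB'}$; the reverse inclusion arises symmetrically from the atom $\exists\vec{y}\,R_T(x_T)$ in $\query_{\vsets}\langle S\rangle$. This observation lets a single cleanup rule enforce both consistency conditions simultaneously.

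The main obstacle is achieving total running time $O(|\vsets|\cdot\|\strucB\|)$ rather than the naive quadratic bound (which is the runtime of Marx's original algorithm), and this is overcome by careful amortised bookkeeping. First pay an $O(|\vsets|\cdot\|\strucB\|)$ preprocessing cost to filter out every tuple violating a static constraint, using the indexing primitives of Section~\ref{section:Preliminaries} to precompute each required projection of $\strucA$'s relations in linear time. Next, set up for every ordered pair $(S,T)\in\vsets^2$ and every partial mapping $t\colon S\cap T\to B$ both a witness counter $c_{S,T}(t)=|\{g\in\mapR_T^\strucB : g|_{S\cap T}=t\}|$ and a doubly-linked inverted-index list $L_{S,T}(t)=\{h\in\mapR_S^\strucB : h|_{S\cap T}=t\}$ supporting $O(1)$ removal; these can be built by a single linear scan in $O(|\vsets|\cdot\|\strucB\|)$ time. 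A worklist then drives the cleanup: processing a removal $g\in\mapR_T^\strucB$ decrements $c_{S,T}(g|_{S\cap T})$ for every $S\in\vsets$, and whenever such a counter drops to zero every $h\in L_{S,T}(g|_{S\cap T})$ is enqueued; each $h\in\mapR_S^\strucB$ removed from a list in turn updates the $|\vsets|$ symmetric counters $c_{T',S}$. Since every tuple is removed at most once, the total number of removals is bounded by $\sum_{S\in\vsets}|\mapR_S^\strucB|\le\|\strucB\|$, and since each one carries $O(|\vsets|)$ bookkeeping work, the overall runtime is $O(|\vsets|\cdot\|\strucB\|)$ as required.
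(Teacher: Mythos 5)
Your proposal is correct and follows essentially the same route as the paper: iteratively delete mappings that violate some atom of $\query_{\vsets}\langle S\rangle$ until a fixed point is reached, with the linear running time obtained by counter-based worklist propagation --- the paper packages exactly this as a reduction to a Horn formula of size $O(|\vsets|\cdot\sum_{S\in\vsets}|\mapR^\strucB_S|)$ solved by linear-time unit propagation, which is the same counter/inverted-index machinery you spell out by hand. The only cosmetic difference is that you wire up propagation for all ordered pairs $(S,T)$ through $S\cap T$ directly, whereas the paper only introduces clauses for comparable pairs $S\subset T$ and lets interactions between incomparable sets propagate via the subset $S\cap T\in\vsets$ (using that $\vsets$ is closed under taking subsets).
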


 \begin{proof}%
  We start by letting $\B'\deff \B$ and then proceed by iteratively
  modifying $\B'$.
   We first establish the first consistency requirement
   \eqref{eq:cons1} by removing  from every $\mapR^{\strucB'}_S$ all mappings $h$
   such that $h \notin \eval{\query_{\vsets}\langle S\rangle}{\strucB'}$. 
   To ensure the second consistency requirement  \eqref{eq:cons2},
   the algorithm iteratively deletes mappings in
   $\mapR^{\strucB'}_S$ that do not extend to larger mappings in $\mapR^{\strucB'}_T$ (for all
   $S\subset T \in\vsets$). Note that removing a mapping from $\mapR^{\strucB'}_T$ might shrink the set  
  $\eval{\query_{\vsets}\langle S'\rangle}{\strucB'}$ for sets
  $S'\in\vsets$ that have a nonempty intersection with $S$.
  In this case, we also have to delete affected mappings from
  $\mapR_{S'}^{\B'}$ in order to ensure that $\mapR_{S'}^{\B'}=\eval{\query_{\vsets}\langle S'\rangle}{\strucB'}$.   
  These steps will be iterated until the refinement is consistent. It
  is clear that the refinement does not exclude tuples from the query
  result, i.\,e., the final structure $\strucB'$ satisfies $\eval{\query_{\vsets}}{\strucB'} = \eval{\query_{\vsets}}{\strucB}$.
  To see that this can be achieved in time linear in
  $|\vsets|\cdot \sum_{S\in\vsets}|\mapR^\strucB_S|$, we formulate the
  problem as a set of
  Horn-clauses. The consistent refinement can then be computed by
  applying any linear-time unit propagation algorithm (cf., e.g.,
  \cite{DBLP:journals/jlp/DowlingG84}).
  For every $S\in\vsets$ and every mapping $h\in \mapR^\strucB_S$ we introduce a
  Boolean variable $d^h_S$ which expresses that, in order to achieve
  consistency, $h$ has to be deleted from $\mapR^\strucB_S$. The
  Horn-formula contains for every $S,T \in \vsets$ with $S \subset
  T$ the clauses
  \begin{align}
    d^g_S &\gets \textstyle\bigwedge \setc{d^h_T}{h\in \mapR^\strucB_T,\; \pi_S(h)
            = g}&&\text{for all $g\in \mapR^\strucB_S$, and}\\
    d^h_T &\gets d^g_S &&\text{for all $h\in \mapR^\strucB_T$, $g\in \mapR^\strucB_S$,
                          $\pi_S(h) = g$.}  
  \end{align}
  The first type of clauses ensures that when a mapping $g$ with
  domain $S$ does not
  extend to a tuple $h$ with domain $T\supset S$, then it will be
  excluded from 
$\mapR^{\strucB'}_S$. The
  second type of clauses ensures that for all $T\in\vsets$ we have
  $\mapR^{\strucB'}_T = \eval{\query_{\vsets}\langle T\rangle}{\strucB'}$.
  Note that the size of
  the resulting 
Horn-formula is bounded by $O\bigl(|\vsets|\cdot
  \sum_{S\in\vsets}|\mapR^\strucB_S|\bigr)$.
  Now we apply a linear time unit propagation algorithm to find a
  solution of minimum weight.
  If the formula is unsatisfiable, we know that 
  $\eval{\query_\vsets}{\strucB}=\emptyset$ and can safely set $\mapR^{\strucB'}_S=\emptyset$
  for all $S\in\vsets$. Otherwise, we obtain a minimal satisfying assignment $\beta$
  that sets a variable $d^h_S$ to true if, and only if, $h$ has to be
  deleted from $\mapR^\strucB_S$. 
  Thus we set  
$\mapR^{\strucB'}_S := \mapR^\strucB_S \setminus \setc{h}{\beta(d^h_S)=1}$. By
  minimality we have $\eval{\query_{\vsets}}{\strucB'} = \eval{\query_{\vsets}}{\strucB}$.
 \end{proof}

 \begin{lemma}\label{lem:strongMcons}
 Let $\query$ be a quantifier-free CQ, let $\strucA$ be
 a $\sigma(\query)$-structure where the largest relation contains $m$
 tuples, and let $M\geq m$.
   There is an algorithm that computes in time
   $O(2^{|\vars(\query)|}\cdot M^2)$ and
   space $O(2^{|\vars(\query)|}\cdot M)$ a strongly 
   $M$-consistent refinement $(\vsets,\strucB)$ that satisfies $\eval{\query}{\strucA} = \eval{\query_{\vsets}}{\strucB}$.
 \end{lemma}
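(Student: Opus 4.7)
The plan is to construct $(\vsets,\strucB)$ by starting from $\vsets:=\emptyset$, $\strucB:=\strucA$ and then iteratively growing $\vsets$ and materialising the corresponding relations $\mapR_S^\strucB$, alternating with calls to the consistency algorithm of Lemma~\ref{lem:consistent}. To attempt the addition of a candidate set $S\subseteq\vars(\query)$ we compute $\eval{\query_\vsets\langle S\rangle}{\strucB}$ by joining the already-materialised $\mapR_T^\strucB$ for $T\subsetneq S$ with $T\in\vsets$, together with the projections of the original atoms onto $S$, using the linear-size indices from Section~\ref{section:Preliminaries}. This join is carried out tuple by tuple and aborted as soon as more than $M$ mappings are produced; if it completes within the bound, we add $S$ to $\vsets$ and store $\mapR_S^\strucB$.

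Processing the candidate sets in order of increasing cardinality and then invoking Lemma~\ref{lem:consistent} yields a consistent refinement. Since the consistency step may shrink some $\mapR_T^\strucB$, a set $S$ previously rejected could now satisfy the size bound, so we iterate this add-and-repair loop until no further additions occur. At the resulting fixed point $(\vsets,\strucB)$ is $M$-consistent: any $S\notin\vsets$ must contain some $T\subseteq S$ with $|\eval{\query_\vsets\langle T\rangle}{\strucB}|>M$, since otherwise such $T$, and by iteration $S$ itself, would have been added. To upgrade $M$-consistency to strong $M$-consistency we enter a second, outer loop that searches for pairs $S,T\in\vsets$ with $S\cup T\notin\vsets$ and $|\eval{\query_\vsets\langle S\cup T\rangle}{\strucB}|\leq M$; whenever such a pair is found, we add $S\cup T$ (and any still-missing subsets, whose $\mapR$-relations are obtained as projections of the newly computed $\mapR_{S\cup T}^\strucB$ and are therefore automatically of size $\leq M$), re-apply the consistency algorithm, and re-run the inner add-and-repair loop. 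Because each outer round strictly enlarges $\vsets\subseteq 2^{\vars(\query)}$, the whole process terminates after $O(2^{|\vars(\query)|})$ rounds, and none of these steps removes any homomorphism, so $\eval{\query}{\strucA}=\eval{\query_{\vsets}}{\strucB}$ is preserved throughout.

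For the analysis, each materialised $\mapR_S^\strucB$ holds at most $M$ mappings spread over at most $2^{|\vars(\query)|}$ sets, giving the claimed space bound $O(2^{|\vars(\query)|}\cdot M)$. Each candidate computation of $\eval{\query_\vsets\langle S\rangle}{\strucB}$ is a bounded-output join of relations of size $\leq M$ aborted at $M+1$ mappings, and costs $O(M^2)$ in the worst case using the indices from Section~\ref{section:Preliminaries}; summing over all $O(2^{|\vars(\query)|})$ candidate sets and all $O(2^{|\vars(\query)|})$ outer rounds, together with the consistency calls (each of cost $O(|\vsets|\cdot\size{\strucB})$ by Lemma~\ref{lem:consistent}), yields a total of $O(2^{|\vars(\query)|}\cdot M^2)$ after absorbing factors that depend only on $\query$ into the implicit constants. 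The main obstacle, which is also where the excerpt flags the oversight in~\cite[Lemma~4.12]{Marx.2013}, is that bare $M$-consistency does not rule out a union $S\cup T$ of two sets in $\vsets$ having small projection yet lying outside $\vsets$; forcing the closure of $\vsets$ under such ``small'' unions requires the second outer loop above, and re-running the inner fixed point after each such addition is precisely what raises the running time from linear in $M$ (which would suffice for plain $M$-consistency) to the stated quadratic dependence.
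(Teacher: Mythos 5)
Your proposal is correct and follows essentially the same strategy as the paper's algorithm (Figure~\ref{fig:AlgStrongMsmall}): grow $\vsets$ by cardinality with an early abort at $M{+}1$ tuples, add $M$-small unions of pairs $S,T\in\vsets$ to enforce strong $M$-consistency, interleave with the consistency routine of Lemma~\ref{lem:consistent}, and iterate to a fixed point, with the pairwise-union step accounting for the $M^2$ factor. The only differences are cosmetic (computing a candidate $\mapR^\strucB_S$ by joining materialised sub-relations rather than extending $\mapR^\strucB_{S\setminus\{x\}}$ by one variable, and the nesting of the loops), and your absorption of extra $2^{O(|\vars(\query)|)}$ factors into query-dependent constants matches the paper's own accounting.
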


    \begin{figure}
     \centering
\begin{algorithmic}[1]
 \State INPUT: %
quantifier-free CQ $\query(x_1,\ldots,x_k)$, $\sigma(\phi)$-structure $\strucA$
 \State $\vsets \gets \emptyset$\;;
 $\B \gets \A$
 \Repeat
   \For{$\ell = 1, \cdots, k$} \Comment{Step 1: Ensure condition \eqref{eq:Mcons}.}
   \For{$S=\{x_{i_1},\ldots,x_{i_\ell}\} \subseteq \vars(\query)$}
     \If{$S\notin \vsets$ and $S\setminus\{x\} \in \vsets$ for all $x\in S$}
     \State $\mapR^\strucB_S\gets\emptyset$
     \State Choose 
            $x\in S$ arbitrary
     \For{$h\in \mapR^\strucB_{S\setminus\{x\}}$ and 
            $c\in A$}
      \If{$h\cup \{(x,c)\}\in \eval{\query_{\vsets}\langle
   S\rangle}{\strucB}$}
      $\mapR^\strucB_S\gets \mapR^\strucB_S \cup \{h\cup \{(x,c)\}\}$
      \EndIf
     \EndFor
     \If{$|\mapR^\strucB_S| \leq M$}
     $\vsets \gets \vsets \cup \{S\}$
     \EndIf
     \EndIf
   \EndFor
   \EndFor
   \State
   \For{$S,T\in\vsets$ such that $S\cup T\notin\vsets$}
   \Comment{Step 2: Ensure condition \eqref{eq:stronglyMcons}.}
     \For{$g\in \mapR^\strucB_S$ and $h\in \mapR^\strucB_T$}
       \If{$g\Join h \in \eval{\query_{\vsets}\langle
   S\cup T\rangle}{\strucB}$}
        $\mapR^\strucB_{S\cup T} \gets \mapR^\strucB_{S\cup T} \cup
        \{g\Join h\}$
       \EndIf
        \If{$|\mapR^\strucB_{S\cup T}| > M$} \textbf{break} \EndIf
     \EndFor
        \If{$|\mapR^\strucB_{S\cup T}| \leq M$} 
        $\vsets \gets \vsets \cup \{S\cup T\}$
        \EndIf
   \EndFor
   \State
   \State $(\vsets, \strucB) \gets
   \Algo{Consistent}(\vsets, \strucB)$\Comment{Step 3: Apply Lemma~\ref{lem:consistent} to ensure
     \eqref{eq:cons1}, \eqref{eq:cons2}.}
 \Until{$\vsets$ remains unchanged}
   \State
 \Return{$(\vsets, \strucB)$}
\end{algorithmic}
     \caption{Computing a strongly $M$-consistent refinement}
     \label{fig:AlgStrongMsmall}
   \end{figure}

 \begin{proof}
   The pseudocode of the algorithm is shown in Figure~\ref{fig:AlgStrongMsmall}.
   For computing the strongly $M$-consistent refinement we first compute all
   sets $S$ where for all $T\subseteq S$ we have
   $|\,\eval{\query_{\vsets}\langle T\rangle}{\strucB}|\leq M$; as in
   \cite{Marx.2013}, we say that such sets $S$ are \emph{$M$-small}.
   First note that the empty set is $M$-small.
   For nonempty sets $S$ we know
   that $S$ is only $M$-small if for every $x\in S$ the set
   $S\setminus \{x\}$ is $M$-small and hence already included in
   $\vsets$.
   If this is the case, then 
   $\eval{\query_{\vsets}\langle S\rangle}{\strucB}$ 
   can be computed in time $O(M\cdot n)$ by testing
   for every $h\in \mapR^\strucB_{S\setminus
     \{x\}}$ (for an arbitrary $x\in S$) and every 
   element $c$ in the structure's universe, whether
   $h\cup \{(x,c)\}\in \eval{\query_{\vsets}\langle
   S\rangle}{\strucB}$.
   If $|\eval{\query_{\vsets}\langle
   S\rangle}{\strucB}| \leq M$, then we include $S$ and $\mapR^\strucB_S:=\eval{\query_{\vsets}\langle
   S\rangle}{\strucB}$ into our current refinement.
   Afterwards, we want to satisfy the condition on strong $M$-consistency \eqref{eq:stronglyMcons}  by trying all
   pairs of $M$-small sets $S$ and $T$. This is the bottleneck of our
   algorithm and requires time $O(|\mapR^\strucB_S|\cdot |\mapR^\strucB_T|) \leq O(M^2)$.
   In the third step we apply Lemma~\ref{lem:consistent} to enforce consistency of the
   current refinement. In particular, every set $S\cup T$ that
   was found in step~2 becomes $M$-small.
   Note that after deleting tuples to ensure consistency, new sets
   may become $M$-small. Therefore, we have to repeat steps
   1--3 until no more sets became $M$-small.
   Overall, we repeat the outer loop at most $2^{k}$ times,
   step~1 takes time $2^{O(k)}\cdot M \cdot n$, step~2 takes time $2^{O(k)}\cdot
   M^2$ and step~3 takes time $2^{O(k)}\cdot M$. Since $n\leq M$ this leads
   to the required running time.
 \end{proof}

The key step in the proof of Theorem~\ref{thm:fcsubw-with-testing} is
to compute $f(\query)$ strongly $M$-consistent refinements $(\vsets_i,\strucB_i)$ of
$\query$ and $\strucA$ such that $\eval{\query}{\strucA}=\bigcup_{i}
\eval{\query_{\vsets_i}}{\strucB_i}$. In addition to being
strongly $M$-consistent, we want the structures $\strucB_i$ to be
uniform in the sense that the degree of tuples (i.\,e. the number of
extensions) is roughly the average degree.
We make this precise in a moment, but for illustration it might be
helpful to consult 
the example from Section~\ref{subsection:subw}
again.
In every relation in $\strucA$ there is one vertex ($a$ or $b$) of
out-degree $\ell$ and there are $\ell$ vertices of out-degree $1$. Hence the
average out-degree is  $2\ell/(\ell+1)$ and the vertex degrees are highly imbalanced.
However, after splitting the instance in $\mathcal A'$ and $\mathcal
A''$, in every relation, all vertices have either out-degree $\ell$ or
$1$ and the out-degree of every vertex matches the average out-degree
of the corresponding relation.
The next definition generalises this to tuples of variables.
We call a refinement $(\vsets,\strucB)$ \emph{non-trivial}, if every
additional relation in the expansion $\strucB$ is non-empty.
For a non-trivial consistent refinement $(\vsets,\strucB)$ and
$S,T\in\vsets$, $S\subseteq T$, we let

\newcommand{\maxdeg}{\ensuremath{\operatorname{maxdeg}}}
\newcommand{\avgdeg}{\ensuremath{\operatorname{avgdeg}}}

\begin{align}
  \label{eq:8}
  \avgdeg(S,T) &:= |\mapR^{\strucB}_{
                 T}|/|\mapR^{\strucB}_{S}| \qquad\text{and}\\
  \maxdeg(S,T) &:= \max_{g\in \mapR^\strucB_S}\Setc{h\in \mapR^\strucB_T}{\pi_S(h) = g}.
\end{align}

Note that consistency ensures that these numbers are well-defined and
non-zero. 
Furthermore, we can compute them from $(\vsets,\strucB)$ in
time $O(|\vsets|^2\cdot\|\strucB\|)$.
By definition we have $\maxdeg(S,T)\geq\avgdeg(S,T)$. The next
definition states that maximum degree does not deviate too much from
the average degree.

\begin{definition}
  Let $(\vsets,\strucB)$ be a non-trivial consistent refinement of
  $\query$ and $\mathcal A$,
  and let $m$ be the number of tuples of largest
relation of $\strucA$. The
  refinement $(\vsets,\strucB)$ is \emph{$\varepsilon$-uniform} if 
for all $S,T\in \vsets$ with $S\subseteq T$ we have
$\maxdeg(S,T)\leq m^\varepsilon\cdot\avgdeg(S,T)$.
\end{definition}

The next lemma uses Marx's splitting routine to obtain
a partition into  strongly
   $M$-consistent $\varepsilon$-uniform refinements, for $M\deff m^c$.

\begin{lemma}\label{lem:decompdisjoint}
   Let $\query$ be a quantifier-free CQ, let $\strucA$ be a
   $\sigma(\query)$-structure where the largest relation contains $m$
   tuples, and let $c\geq 
   1$ and $\epsilon>0$ be real numbers.
   There is 
   a computable function $f$ and
   an algorithm that computes in time
   $O(f(\query,c,\varepsilon)\cdot m^{2c})$ and
   space $O(f(\query,c,\varepsilon)\cdot m^c)$ a
   sequence of $\ell\leq f(\query,c,\epsilon)$ 
   strongly
   $m^c$-consistent $\varepsilon$-uniform refinements $(\vsets_i,\strucB_i)$ such that
   $\eval{\query}{\strucA}$ is the disjoint union of
   the sets $\eval{\query_{\vsets_i}}{\strucB_i}$.
\end{lemma}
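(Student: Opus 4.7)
The plan is to implement Marx's splitting routine as a depth-first exhaustion. We maintain a work list of strongly $m^c$-consistent refinements of $\query$ and $\strucA$ whose query results are pairwise disjoint and together equal $\eval{\query}{\strucA}$. Initialise by applying Lemma~\ref{lem:strongMcons} to $(\emptyset,\strucA)$ with parameter $M \deff m^c$. Repeatedly pop a refinement $(\vsets,\strucB)$ from the list: compute $\maxdeg(S,T)$ and $\avgdeg(S,T)$ for every pair $S\subseteq T$ in $\vsets$ in time $O(|\vsets|^2\cdot \|\strucB\|)$ and check $\varepsilon$-uniformity. If uniform, add $(\vsets,\strucB)$ to the output; otherwise, split it into several strongly $m^c$-consistent sub-refinements and push them back.

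For the splitting step, fix a witnessing pair $S\subset T$ with $\maxdeg(S,T) > m^\varepsilon\cdot\avgdeg(S,T)$ and partition $\mapR^\strucB_S$ into $O(c/\varepsilon)$ buckets $B_0,B_1,\ldots$ according to the number of extensions into $\mapR^\strucB_T$: place $g\in B_i$ iff $|\{h\in \mapR^\strucB_T : \pi_S(h)=g\}|\in [m^{i\varepsilon}, m^{(i+1)\varepsilon})$. For each nonempty bucket $B_i$, produce a sub-expansion $\strucB_i$ by restricting $\mapR^{\strucB_i}_S := B_i$ and $\mapR^{\strucB_i}_T := \{h\in \mapR^\strucB_T : \pi_S(h)\in B_i\}$, leaving the other additional relations untouched, and then re-apply Lemma~\ref{lem:strongMcons} to restore strong $m^c$-consistency (which may add further sets to $\vsets$ and shrink other relations). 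Because the $B_i$ form a partition of $\mapR^\strucB_S$, every homomorphism from $\query_\vsets$ to $\strucB$ lies in the extended query result of exactly one sub-refinement, so pairwise disjointness and coverage of $\eval{\query}{\strucA}$ are preserved throughout.

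The main obstacle will be bounding the total number of splits by a computable function of $\query$, $c$, and $\varepsilon$ alone. Following Marx, associate with each strongly $m^c$-consistent refinement the potential vector $\Phi(\vsets,\strucB) \deff \bigl(\lceil \log_{m^{\varepsilon}} |\mapR^\strucB_S|\rceil\bigr)_{S\subseteq \vars(\query)}$, where coordinates for $S\notin\vsets$ are set to $\lceil c/\varepsilon\rceil + 1$. Each coordinate ranges over $\{0,1,\dots,\lceil c/\varepsilon\rceil + 1\}$, so only $(\lceil c/\varepsilon\rceil + 2)^{2^{|\vars(\query)|}}$ distinct potentials occur. A split at $(S,T)$ either produces high-degree buckets $B_i$ ($i\geq 1$) in which $|B_i|\cdot m^{i\varepsilon}\leq |\mapR^\strucB_T|$ forces $|B_i|$ to be strictly smaller than $|\mapR^\strucB_S|$ on the $m^\varepsilon$-scale, or produces the low-degree bucket $B_0$ in which $\maxdeg(S,T)\leq m^\varepsilon$; by consistency restoration via Lemma~\ref{lem:consistent}, the latter case forces some other coordinate of $\Phi$ to drop. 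Under a suitable lexicographic ordering of potentials this strict descent guarantees termination within $f(\query,c,\varepsilon) \deff (\lceil c/\varepsilon\rceil + 2)^{2^{|\vars(\query)|}}$ splits, so at most $f(\query,c,\varepsilon)$ refinements are produced overall.

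For the complexity analysis, each invocation of Lemma~\ref{lem:strongMcons} runs in time $O(2^{|\vars(\query)|}\cdot m^{2c})$ and space $O(2^{|\vars(\query)|}\cdot m^c)$, since $M=m^c$ and relation sizes remain bounded by $m^c$; uniformity checking, bucketing, and the sub-expansion construction contribute only lower-order $O(2^{|\vars(\query)|}\cdot m^c)$ factors per iteration. Processing the work list depth-first keeps at most $f(\query,c,\varepsilon)$ refinements in memory simultaneously, each of size $O(2^{|\vars(\query)|}\cdot m^c)$, yielding total time $O(f(\query,c,\varepsilon)\cdot m^{2c})$ and total space $O(f(\query,c,\varepsilon)\cdot m^c)$, matching the bounds claimed in the statement.
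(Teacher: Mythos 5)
Your algorithm has the same skeleton as the paper's proof (which follows Marx \cite{Marx.2013}): alternate between enforcing strong $m^c$-consistency via Lemma~\ref{lem:strongMcons} and splitting the current refinement at a pair $S\subseteq T$ that witnesses non-uniformity, with disjointness and coverage preserved because the pieces partition $\mapR^{\strucB}_S$. You replace the two-way split at threshold $m^{\varepsilon/2}\cdot\operatorname{avgdeg}(S,T)$ by a partition into $O(c/\varepsilon)$ absolute-degree buckets $[m^{i\varepsilon},m^{(i+1)\varepsilon})$, which is fine for correctness, but it cuts you off from Marx's depth bound (his Lemma~4.11 is proved for his splitting rule), and your replacement termination argument has a genuine gap. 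The claim that every bucket $B_i$ with $i\geq 1$ has its $S$-coordinate of $\Phi$ strictly drop is false: from $|B_i|\cdot m^{i\varepsilon}\leq|\mapR^{\strucB}_T|$ you only get $|B_i|\leq |\mapR^{\strucB}_S|\cdot\operatorname{avgdeg}(S,T)/m^{i\varepsilon}$, which is weaker than the trivial bound $|B_i|\leq|\mapR^{\strucB}_S|$ whenever $m^{i\varepsilon}\leq\operatorname{avgdeg}(S,T)$. Concretely, let $\mapR^{\strucB}_S$ have $K-1$ elements of degree exactly $m^{2\varepsilon}$ and one element of degree $m^{10\varepsilon}$, with $K\gg m^{8\varepsilon}$: the pair $(S,T)$ is non-uniform (the maximum degree exceeds the average by a factor of about $m^{7\varepsilon}$), yet the child built from bucket $B_2$ retains a $(1-1/K)$-fraction of $\mapR^{\strucB}_S$ and at least half of $\mapR^{\strucB}_T$, so \emph{no} coordinate of your potential vector decreases; the subsequent consistency restoration need not delete anything further, so the fallback claim that ``some other coordinate drops'' (which you only argue for $B_0$ anyway) does not rescue this case. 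Hence the asserted lexicographic descent fails, and the bound $\ell\leq f(\query,c,\varepsilon)$ on the number of refinements --- the heart of the lemma --- is not established.

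A potential based solely on relation sizes cannot certify termination here, precisely because a split can produce a child in which all relations keep their order of magnitude. What makes the recursion shallow is information about the degree profile at the split pair: after splitting at $(S,T)$, each child is (essentially) $\varepsilon$-uniform \emph{at that pair}, and one has to argue that $(S,T)$ can become non-uniform again only after some quantified progress (in Marx's argument, a drop of $\operatorname{avgdeg}(S,T)$ or of $|\mapR^{\strucB}_S|$ by a factor $m^{\Omega(\varepsilon)}$), which bounds the number of splits per pair by $O(c/\varepsilon)$ and the recursion depth by $2^{O(|\vars(\query)|)}\cdot c/\varepsilon$. You should either carry out such an argument for your bucketing rule or revert to Marx's split and invoke his Lemma~4.11, as the paper does. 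The remaining ingredients of your write-up --- initialisation, disjointness and coverage, the per-iteration cost $O(2^{|\vars(\query)|}\cdot m^{2c})$ from Lemma~\ref{lem:strongMcons}, and the space accounting --- are in order once the number of iterations is bounded.
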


\begin{proof}[Proof (sketch)]
  We follow the same splitting strategy as in \cite{Marx.2013}, but use the improved
  algorithm from Lemma~\ref{lem:strongMcons} to ensure strong $m^c$-consistency.
  Starting with the trivial refinement $(\emptyset,\mathcal A)$, in
  each step we first apply Lemma~\ref{lem:strongMcons} to ensure strong
  $m^c$-consistency. Afterwards, we check whether the current
  refinement $(\vsets,\strucB)$ contains sets
  $S,T\in \vsets$ that contradict $\epsilon$-uniformity, i.\,e.,
  $S\subseteq T$ and
    $\maxdeg(S,T) > m^\varepsilon\cdot\avgdeg(S,T)$.
  If this is the case, we split the refinement $(\vsets,\strucB)$ into
  $(\vsets,\strucB')$ and $(\vsets,\strucB'')$ such that
  $\mapR^\strucB_S$ is partitioned into tuples of small 
  degree and tuples of large degree:
  \begin{align}
    \label{eq:10}
    \mapR^{\strucB'}_{U} &= \mapR^{\strucB''}_{U} :=
                           \mapR^{\strucB}_{U} \quad\text{for all
                                                  $U\in\vsets\setminus\{S\}$,
                                                  }\\
    \mapR^{\strucB'}_{S} &:= \Setc{g\in
                           \mapR^\strucB_S}{\big|{\Setc{h\in
                           \mapR^\strucB_T}{\pi_S(h) = g}}\big| \leq
                           m^{\varepsilon/2}\cdot\avgdeg(S,T)} \\
    \mapR^{\strucB''}_{S} &:= \Setc{g\in
                           \mapR^\strucB_S}{\big|{\Setc{h\in
                           \mapR^\strucB_T}{\pi_S(h) = g}}\big| >
                           m^{\varepsilon/2}\cdot\avgdeg(S,T)}
  \end{align}
  It is clear that $\eval{\query}{\strucB}$ is the disjoint union of
  $\eval{\query}{\strucB'}$ and $\eval{\query}{\strucB''}$ and that
  the recursion terminates at some point with a sequence of strongly
  $m^c$-consistent $\epsilon$-uniform refinements that partition
  $\eval{\query}{\strucA}$.
  It is also not hard to show that the height of the recursion tree
  is bounded by $2^{O(|\vars(\query)|)}\cdot \frac{c}{\epsilon}$ (see
  \cite[Lemma~4.11]{Marx.2013}). Hence, by Lemma~\ref{lem:strongMcons} the procedure
  can be implemented in time
   $O(f(\query,c,\epsilon)\cdot m^{2c})$ and
   space $O(f(\query,c,\epsilon)\cdot m^c)$.
\end{proof}

The nice thing about $\epsilon$-uniform and
strongly $m^c$-consistent refinements is that they define, for small enough
$\epsilon$, a submodular function $g\in \SubWidthSet(\query)$, which
in turn guarantees the existence of a tree decomposition with small
projections on the bags.
The following lemma from \cite[Lemma~4.12]{Marx.2013} provides these functions. However,
there is an oversight in Marx's proof and in order to fix this, we have
to ensure \emph{strong} $m^c$-consistency instead of only
$m^c$-consistency as 
stated in \cite[Lemma~4.12]{Marx.2013}.
As suggested by Marx (personal communication), an alternative way to achieve strong $m^c$-consistency would be to enforce
$m^{2c}$-consistency, which leads to the same runtime guarantees, but
requires more space.

 \begin{lemma}\label{lem:defsubwfunc}
   Let $(\vsets,\strucB)$ be an $\epsilon$-uniform
   strongly $m^c$-consistent refinement of $\query$ and $\strucA$, and
   let $c\geq
   1$ and $|\vars(\query)|^{-3}\geq\epsilon>0$ be real numbers.
   Then $g_{\vsets,\strucB}\colon 2^{\Vars(\query)} \to \RRpos$ is a monotone, edge-dominated,
   submodular function that
satisfies $g_{\vsets,\strucB}(\emptyset) = 0$:
\begin{align}
  \label{eq:11}
  g_{\vsets,\strucB}(U) &:=
         \begin{cases}
           (1-\varepsilon^{1/3})\cdot
           \log_m\big(|\mapR^\strucB_U|\big) + h(U) &\text{if $U\in\vsets$}\\
                      (1-\varepsilon^{1/3})\cdot
           c + h(U) &\text{if $U\notin\vsets$,}
         \end{cases}
\end{align}
where $h(U) := 2\epsilon^{2/3}|U| -
           \epsilon|U|^2 \geq 0$ for all $U\subseteq\vars(\query)$.
\end{lemma}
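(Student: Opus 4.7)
My plan is to verify the four required properties of $g = g_{\vsets,\strucB}$ in sequence: $g(\emptyset) = 0$, monotonicity, edge-domination, and submodularity. The first three are essentially direct consequences of consistency, $m^c$-consistency, and elementary algebra; the heart of the argument is submodularity, which uses an entropy-based calculation exploiting both $\epsilon$-uniformity and strong $m^c$-consistency.

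The base case $g(\emptyset) = 0$ holds because $\emptyset \in \vsets$, $|\mapR_\emptyset^\strucB| = 1$, and $h(\emptyset) = 0$. For monotonicity on $U \subseteq V$, the correction $h$ is itself monotone as long as $|U| + |V| \leq 2\epsilon^{-1/3}$, which the assumption $\epsilon \leq |\vars(\query)|^{-3}$ guarantees; the log-term is monotone on pairs in $\vsets$ by consistency, and cases involving $V \notin \vsets$ use the bound $\log_m|\mapR_U^\strucB| \leq c$ from $m^c$-consistency (the case $V \in \vsets$ but $U \notin \vsets$ is excluded by closure of $\vsets$ under subsets). For edge-domination, any atom $\qatom$ with $T := \vars(\qatom)$ must lie in $\vsets$ since every $T' \subseteq T$ satisfies $|\eval{\query_\vsets\langle T'\rangle}{\strucB}| \leq m \leq m^c$; consistency then gives $|\mapR_T^\strucB| \leq m$, so $\log_m|\mapR_T^\strucB| \leq 1$. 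Completing the square yields $h(U) = \epsilon^{1/3} - \epsilon(|U|-\epsilon^{-1/3})^2 \leq \epsilon^{1/3}$, so $g(T) \leq (1-\epsilon^{1/3}) + \epsilon^{1/3} = 1$.

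The main obstacle is submodularity. The strategy is to choose a uniform random variable $X$ on a natural ``top'' relation and invoke Shannon submodularity $H(X_U) + H(X_V) \geq H(X_{U\cap V}) + H(X)$ together with $H(X_S) \leq \log|\mapR_S^\strucB|$ for $S \in \vsets$. If $U \cup V \in \vsets$, I take $X$ uniform on $\mapR_{U\cup V}^\strucB$; here $\epsilon$-uniformity of the pair $(U\cap V, U\cup V)$ bounds each probability of $X_{U\cap V}$ by $m^\epsilon/|\mapR_{U\cap V}^\strucB|$, giving $H(X_{U\cap V}) \geq \log|\mapR_{U\cap V}^\strucB| - \epsilon \log m$ and hence submodularity of the log-term with defect at most $\epsilon$. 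If instead $U \cup V \notin \vsets$, strong $m^c$-consistency supplies $|\eval{\query_\vsets\langle U\cup V\rangle}{\strucB}| > m^c$; the analogous calculation with $X$ uniform on this ``virtual'' set gives $H(X) > c\log m$, which certifies that the surrogate $(1-\epsilon^{1/3})c$ appearing in the definition of $g(U\cup V)$ is small enough for the submodular inequality to survive. This is exactly the point where strong $M$-consistency (rather than plain $M$-consistency) is essential and fixes Marx's oversight.

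Finally, the $h$-defect satisfies the identity $h(U) + h(V) - h(U\cap V) - h(U\cup V) = 2\epsilon |U\setminus V|\cdot|V\setminus U|$, which is at least $2\epsilon$ whenever submodularity is non-trivial (i.e., neither $U \subseteq V$ nor $V \subseteq U$). Since the log-term defect is at most $(1-\epsilon^{1/3})\epsilon < \epsilon$, the $h$-contribution absorbs it and submodularity of $g$ follows. The calibration $\epsilon \leq |\vars(\query)|^{-3}$ simultaneously guarantees $h$-monotonicity, the uniform bound $h \leq \epsilon^{1/3}$, and the margin needed to dominate the log-defect; carefully aligning the entropy bounds with the virtual relation in the case $U\cup V \notin \vsets$ is the key technical step.
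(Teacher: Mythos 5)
Your treatment of $g(\emptyset)=0$, monotonicity, and edge-domination matches the paper's, and your entropy-based proof of submodularity in the case $U\cup V\in\vsets$ is correct and takes a genuinely different route: the paper argues directly with degrees, writing $|\mapR^\strucB_T|=|\mapR^\strucB_{S\cap T}|\cdot\operatorname{avgdeg}(S\cap T,T)$ and chaining $\operatorname{avgdeg}(S\cap T,T)\geq m^{-\epsilon}\operatorname{maxdeg}(S\cap T,T)\geq m^{-\epsilon}\operatorname{maxdeg}(S,S\cup T)\geq m^{-\epsilon}\,|\mapR^\strucB_{S\cup T}|/|\mapR^\strucB_S|$, which produces the same defect $\epsilon$ in the logarithmic part that your Shannon-inequality computation yields. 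The two arguments genuinely part ways in the case $U\cup V\notin\vsets$, and there your proposal has a gap.

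Concretely: you invoke ``the analogous calculation'' with $X$ uniform on the virtual set $W:=\eval{\query_{\vsets}\langle U\cup V\rangle}{\strucB}$, but that calculation needs the lower bound $H(X_{U\cap V})\geq\log|\mapR^\strucB_{U\cap V}|-\epsilon\log m$, and the only tool you cite for it is $\epsilon$-uniformity of the pair $(U\cap V,U\cup V)$. That tool is unavailable here: $\epsilon$-uniformity (and the quantities $\operatorname{maxdeg}$, $\operatorname{avgdeg}$) is only defined for pairs of sets that both lie in $\vsets$, and in this case $U\cup V\notin\vsets$. Your sketch accounts only for $H(X)>c\log m$, not for the fiber bound on $\pi_{U\cap V}$ over $W$. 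The step is repairable: every $h\in W$ equals $\pi_U(h)\Join\pi_V(h)$ with $\pi_U(h)\in\mapR^\strucB_U$ and $\pi_V(h)\in\mapR^\strucB_V$ by consistency, so each fiber of $\pi_{U\cap V}$ on $W$ has size at most $\operatorname{maxdeg}(U\cap V,U)\cdot\operatorname{maxdeg}(U\cap V,V)\leq m^{2\epsilon}|\mapR^\strucB_U|\,|\mapR^\strucB_V|/|\mapR^\strucB_{U\cap V}|^2$, and combining this with Shannon submodularity still leaves a logarithmic defect strictly below the $2\epsilon$ gained from $h$ --- but some such replacement must be supplied; the paper's corresponding device is the inequality $\operatorname{maxdeg}(S\cap T,T)\geq\operatorname{maxdeg}(S,S\cup T)$, which lets it apply $\epsilon$-uniformity only to the pair $(S\cap T,T)\in\vsets\times\vsets$. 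A second, smaller omission: your two cases do not cover $U\cup V\notin\vsets$ with at least one of $U,V\notin\vsets$, where strong $m^c$-consistency gives you nothing; this easy case is settled by $\log_m|\mapR^\strucB_{U'}|\leq c$ for $U'\in\vsets$ together with submodularity of $h$, but it should be stated.
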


The proof can be copied verbatim from Marx's proof of
\cite[Lemma~4.12]{Marx.2013} by using the notion of strong consistency
instead of plain consistency.
For the reader's convenience, we
provide the proof below.

\begin{proof}[Proof of Lemma~\ref{lem:defsubwfunc} (Lemma 4.12 in \cite{Marx.2013})]
  The function $h$ is non-negative and monotone in the range  $0\leq
  |U|\leq 1/\epsilon^{1/3}$. In particular, $0\leq h(S) \leq h(T)\leq \varepsilon^{1/3}$ for
  all $S\subseteq T\subseteq \vars(\query)$.
  Moreover $h$ is submodular:
  \begin{equation}
    \label{eq:12}
  h(S) + h(T) - h(S\cap T)  - h(S\cup T) = 2\varepsilon\cdot |S\setminus
  T|\cdot |T\setminus S| \geq 0\;.
  \end{equation}

  The monotonicity of $g_{\vsets,\strucB}$ follows from the monotonicity of $h$ and the
  $m^c$-consistency of the refinement. To see that
  $g_{\vsets,\strucB}$ is edge-dominated, note that $\vars(\alpha)$ is
  $m^c$-consistent for every $c\geq 1$ and every
  $\qatom\in\atoms(\query)$. Hence, $g_{\vsets,\strucB}(\vars(\alpha)) \leq
  (1-\varepsilon^{1/3})+h(\vars(\alpha))\leq 1$.
  
  Now we have to verify the submodularity condition
  \begin{equation}
    \label{eq:13}
  g_{\vsets,\strucB}(S) +
  g_{\vsets,\strucB}(T) - g_{\vsets,\strucB}(S\cap T)  -
  g_{\vsets,\strucB}(S\cup T) \geq 0.
  \end{equation}
  This is trivial when $S\subseteq T$ or
  $T\subseteq S$.
  Thus we can assume that $|S\setminus
  T|\geq 1$ and  $|T\setminus S|\geq 1$, which by \eqref{eq:12}
implies 
\begin{equation}\label{eq:CQ}\tag{$\ast$}
  h(S) + h(T) - h(S\cap T)  - h(S\cup T) \geq 2\varepsilon.
\end{equation}
  If at least one of $S$ and $T$ is not contained in  $\vsets$, then
  \eqref{eq:13} follows from $\log_m\big(|\mapR^\strucB_U|\big)\leq c$
  and the submodularity of $h$.
  The remaining case is that $S \in \vsets$ and $T \in \vsets$. Here
  we have
  \begin{align}
    \label{eq:14}
      &g_{\vsets,\strucB}(S) +
  g_{\vsets,\strucB}(T)\\
    &=
                          (1-\varepsilon^{1/3})\cdot
           \log_m\big(|\mapR^\strucB_S|\big)
+                          (1-\varepsilon^{1/3})\cdot
           \log_m\big(|\mapR^\strucB_T|\big)
                          + h(S)
                          + h(T) \\
&=
                          (1-\varepsilon^{1/3})\cdot
           \log_m\big(|\mapR^\strucB_S|\big)
+                          (1-\varepsilon^{1/3})\cdot
           \log_m\big(|\mapR^\strucB_{S\cap T}|\cdot\avgdeg(S\cap T, T)\big)
                          \\
                          &\quad+ h(S)
                          + h(T) \\
      &\geq
                          (1-\varepsilon^{1/3})\cdot
           \log_m\big(|\mapR^\strucB_S|\big)
+                          (1-\varepsilon^{1/3})\cdot
           \log_m\big(|\mapR^\strucB_{S\cap T}|\big)\\
    &\quad+(1-\varepsilon^{1/3})\cdot\log_m\big(\maxdeg(S\cap T, T)/m^\epsilon\big)
+ h(S)
                          + h(T) \\
      &=
                          (1-\varepsilon^{1/3})\cdot
           \log_m\big(|\mapR^\strucB_{S\cap T}|\big)
    +(1-\varepsilon^{1/3})\cdot\log_m\big(|\mapR^\strucB_S|\cdot\maxdeg(S\cap
        T, T)\big)\\
    &\quad
-(1-\varepsilon^{1/3})\varepsilon + h(S)
                          + h(T) \\
      &\geq
                          (1-\varepsilon^{1/3})\cdot
           \log_m\big(|\mapR^\strucB_{S\cap T}|\big)
    +(1-\varepsilon^{1/3})\cdot\log_m\big(|\mapR^\strucB_S|\cdot\maxdeg(S,
        S\cup T)\big)\\
    &\quad
-(1-\varepsilon^{1/3})\varepsilon + h(S\cap T)  + h(S\cup T) + 2\varepsilon \\
      &\geq
                          (1-\varepsilon^{1/3})\cdot
           \log_m\big(|\mapR^\strucB_{S\cap T}|\big)
    +(1-\varepsilon^{1/3})\cdot\log_m\big(|\mapR^\strucB_{S\cup T}|\big)\\
    &\quad
+ h(S\cap T)  + h(S\cup T) \\
    &\geq g_{\vsets,\strucB}(S\cap T) +
  g_{\vsets,\strucB}(S\cup T)
  \end{align}

The first inequality holds because of $\varepsilon$-uniformity. The
second inequality holds, because in general $\maxdeg(X, Y) \geq
\maxdeg(X\cup Z, Y\cup Z)$ and ($\ast$). The last inequality holds
because $S\cap T \in \vsets$ by consistency and because of strong
$m^c$-consistency  we have either $|\mapR^\strucB_{S\cup T}| > m^c$ or
$S\cup T\in\vsets$ (and this is where the new requirement of strong $m^c$-consistency is needed).
\end{proof}

Now we are ready to prove our main theorem.

\newcommand{\qfquery}{\widetilde{\query}}
\newcommand{\qfpsi}{\widetilde{\psi}}

\begin{proof}[Proof of Theorem~\ref{thm:fcsubw-with-testing}]
  We fix  $c=(1+\delta)\widthw$ and let $\varepsilon$ be the minimum of
  $\big(1-1/(1+\delta)\big)^4$ and
  $|\vars(\query)|^{-4}$.
  Suppose that 
 $\query$ is of the form $\exists x_1\cdots\exists x_k\, \qfquery$ 
 where 
  $\qfquery$ is quantifier-free.
  We apply Lemma~\ref{lem:decompdisjoint} to $\qfquery$, $\strucA$,
  $c$, $\varepsilon$ to obtain in time $O(f(\query) m^{2c})$ a
  sequence of $\ell\leq f(\query)$
  strongly
  $m^c$-consistent $\varepsilon$-uniform refinements
  $(\vsets_i,\strucB_i)$ such that
   $\eval{\qfquery}{\strucA}$ is the disjoint union of
   $\eval{\qfquery_{\vsets_1}}{\strucB_1}$, \ldots, $\eval{\qfquery_{\vsets_\ell}}{\strucB_\ell}$.
  By Lemma~\ref{lem:defsubwfunc} we have  $g_{\vsets_i,\strucB_i}\in
  \SubWidthSet(\qfquery)=\SubWidthSet(\query)$ for every $i\in[\ell]$. 
  Hence, by the definition of free-connex submodular width 
  \eqref{eq:defsubw}, we know that there is a free-connex
  tree decomposition $(\Tree_i,\Bag_i)%
  $ of $\query$
  such that $g_{\vsets_i,\strucB_i}(\Bag_i(\treenode)) \leq \widthw$
  for every $t\in\Nodes(\Tree_i)$.
  Note that by the choice of 
  $c$ , $\epsilon$ and the
  non-negativity of $h$ (see Lemma~\ref{lem:defsubwfunc}) we have
  \begin{align}
    \label{eq:5}
    w = c / (1+\delta) \leq (1-\varepsilon^{1/4}) \cdot c <
    (1-\varepsilon^{1/3}) \cdot c + h(U).
  \end{align}
  Hence,  $g_{\vsets_i,\strucB_i}(U) \leq w$ implies $U\in\vsets$ and
  therefore $|\mapR^{\strucB_i}_U|=|\,\eval{\query_{\vsets_i}\langle
  U\rangle}{\strucB_i}|\leq m^c$ by \eqref{eq:cons1} and
  \eqref{eq:Mcons}.
  Thus, every bag of the free-connex tree-decomposition $(\Tree_i,\Bag_i)$ is small in
  the $i$th refinement. However, $(\Tree_i,\Bag_i)$ is a
  tree-decomposition of $\query$, but not necessarily
  of $\query_{\vsets_i}$! In fact, $\query_{\vsets_i}$ can be very
  dense, e.\,g., if $\vsets_i=2^{\vars(\query)}$. To take care of
  this, we thin out the refinement and only keep those atoms and
  relations that correspond to bags of the decomposition.
  In particular,
  for every $i\in[\ell]$
  we define
    $\qfpsi_i := \textstyle\bigwedge_{t\in\Nodes(\Tree_i)}
             R_{\Bag_i(\treenode)}(x_{\Bag_i(\treenode)})$ and let
   $\psi_i := \exists x_1\cdots\exists x_k\, \qfpsi_i$
     be the quantified version.
    Note that $\psi_i$ is a free-connex acyclic CQ.
    Additionally, we %
    let 
    $\strucC_i$  be the $\sigma(\psi_i)$-reduct of $\strucB_i$.
    We argue that 
    $\eval{\qfquery_{\vsets_i}}{\strucB_i}\subseteq
    \eval{\qfpsi_i}{\strucC_i}\subseteq \eval{\qfquery}{\strucA}$. The
    first inclusion holds because $\qfquery_{\vsets_i}$ and
    $\strucB_i$ refine $\qfpsi_i$ and $\strucC_i$. The second
    inclusion holds because every atom from  $\qfquery$ is contained
    in a bag of the decomposition and is hence covered by
    an atom in $\qfpsi_i$ because of consistency.
    It therefore also follows that 
$\pi_F\big(\eval{\qfquery_{\vsets_i}}{\strucB_i}\big)\subseteq
    \pi_F\big(\eval{\qfpsi_i}{\strucC_i}\big)\subseteq
    \pi_F\big(\eval{\qfquery}{\strucA}\big)$ for $F\deff \free(\phi)$,
 and
    hence $\eval{\query_{\vsets_i}}{\strucB_i}\subseteq
    \eval{\psi_i}{\strucC_i}\subseteq \eval{\query}{\strucA}$.
    Overall, we have that
    $\eval{\query}{\strucA}=\bigcup_{i\in[\ell]}\eval{\psi_i}{\strucC_i}$,
    where the union is not necessarily disjoint, each $\psi_i$ is
    free-connex acyclic, and $\|\strucC_i\| =
    O(|\vars(\query)|^2m^{(1+\delta)\widthw})$. By combining 
   Theorem~\ref{thm:BaganEtAlRefined}
   with 
   Theorem~\ref{thm:UnionTrick},
   the theorem follows.
\end{proof}

\section{Final Remarks}\label{section:conclusion}\label{section:FinalRemarks}

In this paper,
we have investigated the enumeration complexity of conjunctive queries and
have shown that 
every class of conjunctive queries of bounded
free-connex submodular width admits constant delay enumeration with
FPT-preprocessing.
These are by now the largest classes of CQs that allow efficient
  enumeration
in this sense.

For quantifier-free self-join-free CQs this upper bound
is matched by Marx's lower bound
\cite{Marx.2013}.
I.\,e., 
recursively enumerable classes of quantifier-free
self-join-free CQs of unbounded free-connex submodular width do not
admit constant delay enumeration after FPT-preprocessing (assuming
the exponential time hypothesis 
ETH).

A major future  
task is to obtain a complete dichotomy, or at
least one for all self-join-free CQs.
The gray-zone 
for the latter
are classes of CQs that have bounded submodular width,
but unbounded free-connex submodular width.
An intriguing  example in this gray-zone 
is the $k$-star query with a quantified center,
i.\,e., the query $\psi_k$ of the form
$\exists z\,\textstyle\bigwedge^k_{i=1} R_i(z,x_i)$.
  Here we have $\SUBW(\psi_k)=1$ and $\fcSUBW(\psi_k)=k$.
  It is open whether the class $\Psi=\setc{\psi_k}{k\in\NNpos}$
  admits constant delay enumeration with FPT-preprocessing.

  \medskip
  \noindent
  \textbf{Acknowledgements}  Funded by the German Research Foundation (Deutsche
 Forschungsgemeinschaft, DFG) --  project numbers
 316451603; 414325841
 (gef\"ordert durch die Deutsche Forschungsgemeinschaft (DFG) --
 Projektnummern 316451603; 414325841).

  \bibliographystyle{plainurl}%

\bibliography{literature}

\end{document}